\newtheorem{defn}{Definition}
\newtheorem{thm}{Theorem}[section]
\newtheorem{prop}[thm]{Proposition}
\newtheorem{rem}{Remark}
\begin{document}

\title{Statistical description and dimension reduction of continuous time categorical trajectories with multivariate functional principal components}

\author{Hervé \textsc{Cardot}$^{1}$ and Caroline \textsc{Peltier}$^{2,3}$ \\
{\small
$^{1}$Institut de Math\'{e}matiques de Bourgogne, UMR CNRS 5584, Universit\'{e} Bourgogne Europe, Dijon, France} \\
{\small $^{2}$Inrae, Centre des Sciences du Goût et de l'Alimentation, UMR CNRS-INRAE-Institut Agro, Dijon, France}  \\
{\small $^{3}$Probe Research Infrastructure, Chemosens facility, CNRS-INRAE, Dijon, France }}

\maketitle

\begin{abstract}

Getting  tools that allow simple representations and comparisons of a set of categorical trajectories is of major interest for statisticians. Without loosing any information, we associate to each state  a binary random indicator function, taking values in $\{0,1\}$, and turn the problem of statistical description of the categorical trajectories into a multivariate functional principal components analysis.  
This viewpoint encompasses experimental frameworks where two or more states can be observed simultaneously.
The sample paths being piecewise constant, with a finite number of jumps, this  a rare case in functional data analysis in which the trajectories are not supposed to be continuous and can be observed exhaustively. Under the weak hypothesis assuming only continuity in probability of the $0-1$ trajectories, the means and the (multivariate) covariance function are continuous and  have interpretations in terms of departure from independence of the joint probabilities.  Considering a functional data point of view, we show that the binary trajectories, which are right-continuous functions with left-hand limits, can be seen as random elements in the Hilbert space of square integrable functions. The multivariate functional principal components are simple to interpret and we show that we can  get consistent estimators of the mean trajectories and the covariance functions under weak regularity assumptions.  The ability of the approach to represent categorical trajectories in a small dimension space is illustrated on a data set of sensory perceptions, considering different  gustometer-controlled stimuli experiments. 
\end{abstract}

\noindent \textbf{Keywords} : categorical functional data, continuous time categorical processes, dimension reduction, discontinuous trajectories, jump processes, sensometrics, telegraph process, temporal dominance of sensations, temporal check all that apply.

\section{Introduction}

A lot of attention has been paid in the statistical literature over the last decades to develop tools dedicated to the  analysis and modeling of functional data, considering random trajectories defined on an interval  $[0,T]$ and taking values in $\mathbb{R}$, at each $t \in [0,T]$ (see for example \citealp{Ger2024} and \citealp{KonS2023} for recent reviews of the literature and \citealp{RamsayS2005} for a seminal reference on functional data analysis).  Much less attention has been given to the case in which the trajectories take values in a finite set of elements that are not necessarily numbers, that is to say when $\mathbb{R}$ is replaced by the finite set $\mathcal{S} = \{S_1, \ldots, S_q\}$, with cardinality $q$.

However, there are many examples in which the statistical units of interest are samples of a continuous time random categorical process :  in a  pioneer work on demographic studies,  \cite{DEV1982}  extended the notion of multiple correspondence analysis to continuous time correspondence analysis in order to analyze the time evolution of the ``marital status'' of a sample of women over the period $[1962,1975]$. The  trajectories related to the marital status take values in a state space $\mathcal{S}$ with $q=4$ states (divorced, married, single, widowed). Recent examples of statistical analysis of individual categorical trajectories are found in food science, a domain in which it is of great interest to get information about the temporal perception of aliments
to understand the perception mechanisms. Of particular interest is the Temporal Dominance of Sensations (TDS) approach  which consists in choosing sequentially attributes, among a list composed of $q$ predefined items, describing a food product over tasting (see \citealp{Pineau_2009}). 
 The chosen states correspond to the
most striking perception at a given time and the results of TDS experiments, after time normalization, can be represented via barplots as in Figure~\ref{fig:TDS}.  The technique developed by \citealp{DEV1982}, also called qualitative harmonic analysis (see \citealp{DevSap1980}) or categorical functional data analysis (CFDA) is now available in \textbf{R}, package \texttt{cfda} (see  \citealp{PGV2021}) and can be used to analyze such kind of data. Even if the CFDA  approach can be very powerful by  encoding categorical trajectories into a sequence of real components (see  \citealp{PVSC2023} for an illustration in sensory analysis), 
the interpretation of the results in terms of individual trajectories is often delicate since it is not the purpose of the method. 

\begin{figure}
\centering
\includegraphics[height=9cm]{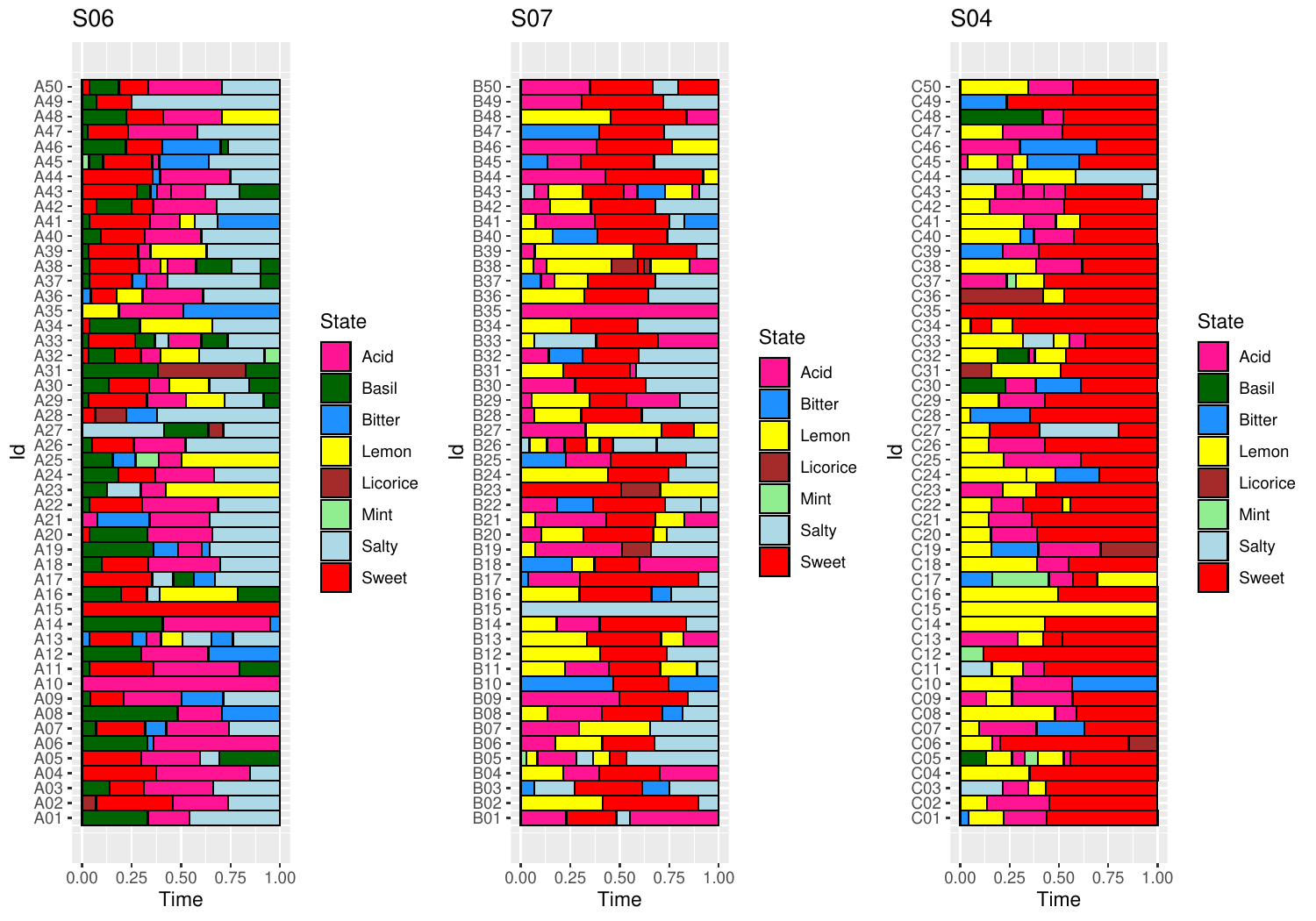}
\caption{TDS bandplot for $n=150$ tasting experiments and $q=8$ states, considering  three different  gustometer-controlled stimuli, S06, S07 and S04, extracted from the open data basis~\cite{BNV2023}. Each row corresponds to a categorical trajectory.}
\label{fig:TDS}       
\end{figure}

Statistical approaches based on Markov processes and their extensions (see \citealp{Lindsey} for an overview and \citealp{LO2001} for an introduction to semi-Markov processes) can be useful to fit the law of the  trajectories at the population level  and to provide a simple representation of the dynamics via the graph of transitions between states. Considering parametric distributions for the sojourn time in the different states also allows to deal with maximum likelihood estimation techniques, permitting to build two-sample tests to compare two populations, \textit{e.g} two food products or two categories of consumers (see \citealp{CF2024}) as well as model-based clustering, considering mixtures of semi-Markov processes  (see \citealp{CFSV2019}).  
A major drawback of this approach is that it is not well suited for analyzing data at the individual level, and the Markovian assumption is often too simplistic to properly fit real data. Additionally, it is difficult to apply when more than one state can be present simultaneously (see for example the TCATA experiment described in \citealp{CAS2016} and in the Supplementary Material), as this requires drastically increasing the number of system states to $2^q$ to account for all possible combinations.

In this work, we  introduce another point of view and associate to each state $S_j$, for $j=1, \ldots, q$,   a random trajectory $X_j(t), t \in [0,T]$, taking value 1 when state $S_j$ is observed at time $t$ and zero else.  The information given by a  categorical trajectory $Y$  is equivalent to the information given by the  $q$ binary 0-1 trajectories $X_1, \ldots, X_q$. 

One could consider such trajectories as compositional data evolving over time (see \citealp{Ait83} for a seminal paper and  \citealp{Greenacre2021} for a recent review on compositional data analysis).  A major difficulty to deal with compositional data approaches in our framework is due to the fact that we deal with individual trajectories so that, at each instant $t$,  $q-1$ individual observations, among  $X_1(t), \ldots, X_q(t)$, have value 0. In other words, at each instant $t$, all the observed units are on the vertices of the simplex, so that usual logarithmic transforms cannot apply directly.

The approach presented in this work is an extension of \cite{CP2025b}, giving a more detailed mathematical description of the properties of the trajectories and estimators of the mean and the covariance. 
This work encompasses both the TDS and TCATA experimental frameworks, the latter allowing for the simultaneous observation of two or more states.

Under the assumption that the trajectories are continuous in probability and have almost surely a finite number of jumps
we get that such random elements are jointly measurable (see \citealp{HE2015}, Chapter 7) and  we deduce that they can be seen as random elements in a multidimensional Hilbert space of square integrable functions. 
This leads us to consider the multivariate functional principal component analysis (MFPCA) of the multidimensional  functional vector $\mathbf{X}=(X_1, \ldots, X_q)$.  
When having a sample of trajectories at hand, consistency results are obtained both from a functional data  and continuous time stochastic process point of view.

We  show on the example presented in Figure~\ref{fig:TDS} that it leads both to  principal components that can be simply interpreted in terms of variations around the mean probability curve related to each state, and powerful tools able to reduce effectively the dimension of categorical functional data in a finite dimension vector space.  
We also present a comparison, on this case study, with  the CFDA approach.

All the codes in the R language (see \citealp{R2024}) are available on Github, {\small \url{https://github.com/Chemosens/ExternalCode/tree/main/MFPCAWithCategoricalTrajectories}}.
Additional numerical experiments, including weighted MFPCA, TCATA analysis, and jump time measurements corrupted by noise, along with an example based on the telegraph process, where explicit calculations are derived, are provided in the Supplementary Material.

\section{Mathematical framework}

Consider a probability space $(\Omega, \mathcal{F}, \mathbb{P})$
and suppose that the collection of  random variables $Y(t)$,  taking values in $\mathcal{S} = \{S_1, \ldots, S_q\}$, are well defined for each $t$ in $[0,T]$. We can consider, in a equivalent way, the collection of $q$-dimensional random vectors $\mathbf{X}(t) = (X_1(t), \ldots, X_q(t))$, where  $X_j(t)$, $j=1, \ldots, q$, are binary stochastic processes, related to the occurrence of state $S_j$ at time $t$, and defined as follows. For all $\omega \in \Omega$, $X_j(t,\omega) = 1$ if $Y(t,\omega) = S_j$ and $0$ otherwise. In other words, $X_j(t) = \mathbf{1}_{\{Y(t)=S_j\}}$, for all $t \in [0,T]$, where $ \mathbf{1}_{\{ . \}}$ is the indicator function. 

As  in \cite{Bill1999} or \cite{HE2015}, it is possible consider two distinct point of views to  describe mathematically  the continuous time categorical trajectories drawn in Figure~\ref{fig:TDS}. 
On the one hand we can consider the continuous time stochastic process point of view and study the properties of $\mathbf{X}(t)$  defined for each $t$ in $[0,T]$. On the other hand  we  can consider a description based on a functional point of view by studying  the random element $\mathbf{X}(.,\omega) = \{ (X_1(t,\omega), \ldots, X_q(t,\omega)) , \ t \in [0,T] \}$, taking values in a function space,  for each $\omega \in \Omega$.

Note that in TDS experiments, where exactly one state is selected at time $t$, the multivariate stochastic process $\mathbf{X}(t)$ follows a multinomial distribution with $n = 1$ trial and a success probability vector $\mathbf{p}(t) = (p_1(t), \ldots, p_q(t))$.
In TCATA experiments, however, the values of $Y(t, \omega)$ are no longer single elements of $\mathcal{S}$ but subsets of $\mathcal{S}$. We define  
$X_j(t) = \mathbf{1}_{\{Y(t) \ni S_j\}}$, which takes the value 1  if the state $S_j$ is observed at time $t$, that is to say $S_j \in Y(t,\omega)$, and 0 otherwise. We have  $\sum_{j=1}^q X_j(t) \in \{0, \ldots, q\}$  and the joint distribution of $\mathbf{X}(t)$ is no longer multinomial. Nevertheless, the marginal distribution of $X_j(t)$ remains Bernoulli with success probability denoted by $p_j(t)$.

Unless explicitly stated otherwise, we will assume from now on that $\mathbf{X}(t)$ is the result of a TDS or TCATA type experiment.

\subsection{The continuous time stochastic process point of view}

 We denote by  $p_{j}(t) = \mathbb{P}\left[Y(t) \ni S_j\right]$, the probability of observing state $S_j$ at time $t$ and, for  $(s,t) \in [0,T] \times [0,T]$ and $(j,\ell) \in \{1, \ldots, q\} \times \{1, \ldots, q\}$,  the joint probability  
\[
p_{j \ell} (s,t) = \mathbb{P}\left[Y(s) \ni S_j \mbox{ and } Y(t) \ni S_\ell \right].
\]
We clearly have 
\begin{align*}
\mathbb{E}[X_j(t)] &= \mathbb{E}[\mathbf{1}_{\{Y(t) \ni S_j\}}] \\
 &= p_j(t).
\end{align*}
Consider two distinct time points $(s,t)$, we have
 \begin{align*}
 \mathbb{E}[X_j(s)X_\ell(t)] &= \mathbb{E}[\mathbf{1}_{\{Y(s) \ni S_j\}} \mathbf{1}_{\{Y(t) \ni S_\ell\}}] \\
  &= p_{j \ell} (s,t).
 \end{align*}
We suppose in the following  that hypothesis $\mathbf{H}_1$ is satisfied  
\begin{align*}
\mathbf{H}_1 : & \quad\lim_{h \to 0} \mathbb{P}[Y(t) \neq Y(t+h) ] = 0, \quad  \forall t \in [0,T]
\end{align*}
ensuring that the trajectories of $X_j, \ j=1, \ldots, q$ are continuous in probability. Hypothesis $\mathbf{H}_1$  prevents  them to have too many jumps or to have jumps that occur at a same time point $t_0 \in [0,T]$ with non null probability.  
Define the covariance functions $\gamma_j(s,t) = \mathbb{C}ov (X_j(s),X_j(t))$ and, for $\ell \neq j$, $\gamma_{j \ell} (s,t) = \mathbb{C}ov (X_j(s),X_\ell(t))$. We remark that $\gamma_{j}(t,t) = p_{j}(t) (1 - p_j(t))$ and  $\gamma_{j\ell}(s,t) = p_{j \ell} (s,t) - p_j(s) p_\ell(t)$, so that the terms out of the diagonal, when $s \neq t$, can be related to a departure from independence. 
 
\begin{prop} Under hypothesis $\mathbf{H}_1$, we have for all $j  \in \{1, \ldots, q\}$, 
\begin{itemize}
\item $p_j$ (resp. $\gamma_j$) is continuous on $[0,T]$ (resp.  on $[0,T] \times [0,T]$)
\item $\gamma_{j \ell}$ is  continuous on $[0,T] \times [0,T]$, for all $\ell \neq j$.
\end{itemize}
\label{prop1}
\end{prop}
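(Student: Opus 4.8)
The plan is to reduce everything to the single elementary bound
\[
|X_j(s) - X_j(s')| \leq \mathbf{1}_{\{Y(s) \neq Y(s')\}},
\]
valid because the indicators $\mathbf{1}_{\{Y(s)=S_j\}}$ and $\mathbf{1}_{\{Y(s')=S_j\}}$ can differ only on the event $\{Y(s) \neq Y(s')\}$, where the difference is at most $1$ in modulus. Combined with $\mathbf{H}_1$, this controls all the increments that appear.

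First I would establish continuity of $p_j$. Writing $p_j(t+h) - p_j(t) = \mathbb{E}[X_j(t+h) - X_j(t)]$, the bound above together with Jensen's inequality gives
\[
|p_j(t+h) - p_j(t)| \leq \mathbb{E}|X_j(t+h) - X_j(t)| \leq \mathbb{P}[Y(t) \neq Y(t+h)],
\]
which tends to $0$ as $h \to 0$ by $\mathbf{H}_1$. For the covariance kernels I would then note that $\gamma_{j \ell}(s,t) = p_{j \ell}(s,t) - p_j(s) p_\ell(t)$, and that the case $\ell = j$ is subsumed by the general formula once one observes that $p_{jj}(t,t) = p_j(t)$, so that the stated diagonal value $\gamma_j(t,t) = p_j(t)(1-p_j(t))$ is simply the restriction of $p_{jj}(s,t) - p_j(s) p_j(t)$ to $s=t$; hence no separate treatment of the diagonal is required. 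Since $p_j$ is continuous by the first step, the product $(s,t) \mapsto p_j(s) p_\ell(t)$ is continuous on $[0,T] \times [0,T]$, and it remains only to prove that each joint probability $p_{j \ell}(s,t) = \mathbb{E}[X_j(s) X_\ell(t)]$ is jointly continuous.

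For the latter I would use a telescoping decomposition,
\[
X_j(s')X_\ell(t') - X_j(s)X_\ell(t) = X_j(s')\big(X_\ell(t') - X_\ell(t)\big) + \big(X_j(s') - X_j(s)\big)X_\ell(t),
\]
and, using $|X_j| \leq 1$ together with the indicator bound, obtain
\[
|p_{j \ell}(s',t') - p_{j \ell}(s,t)| \leq \mathbb{P}[Y(t') \neq Y(t)] + \mathbb{P}[Y(s') \neq Y(s)],
\]
which tends to $0$ as $(s',t') \to (s,t)$ by $\mathbf{H}_1$.

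I expect no serious obstacle here: the entire argument rests on the elementary indicator bound and the boundedness of the $X_j$. The only points requiring a little care are ensuring \emph{joint} (rather than separate) continuity of the bivariate kernels, which the telescoping identity handles by reducing a two-variable increment to two one-variable increments, and checking that the diagonal of $\gamma_j$ is covered by the general formula through the identity $p_{jj}(t,t) = p_j(t)$.
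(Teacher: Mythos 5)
Your proof is correct, but it takes a genuinely different and more elementary route than the paper's. The paper first upgrades $\mathbf{H}_1$ to mean-square continuity of each $X_j$ (via Theorem 1.3.6 of Serfling, using $\sup_t|X_j(t)|\leq 1$), then invokes an equivalence theorem of Hsing and Eubank (the mean and covariance functions are continuous if and only if the process is mean-square continuous) to obtain the first bullet, and finally treats the cross-kernels $\gamma_{j\ell}$ by a telescoping decomposition of covariances combined with the Cauchy--Schwarz inequality. You never leave $L^1$: the pointwise bound $|X_j(s)-X_j(s')|\leq \mathbf{1}_{\{Y(s)\neq Y(s')\}}$ feeds $\mathbf{H}_1$ directly into $|p_j(t+h)-p_j(t)|\leq \mathbb{P}[Y(t)\neq Y(t+h)]$, and the same telescoping identity, applied to the raw joint probabilities $p_{j\ell}$ rather than to covariances, gives joint continuity of $p_{j\ell}$; all kernels $\gamma_{j\ell}=p_{j\ell}-p_j p_\ell$, diagonal and off-diagonal alike, then follow uniformly, your observation $p_{jj}(t,t)=p_j(t)$ correctly disposing of the diagonal case. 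What your approach buys is self-containedness (no external theorems, no Cauchy--Schwarz) and a unified treatment that exploits the $0$--$1$ indicator structure. What the paper's approach buys is an intermediate property of independent interest, mean-square continuity, obtained for arbitrary uniformly bounded processes continuous in probability; that generality matters for the extension in Remark 3 (the TCATA setting), where there is no single underlying categorical process $Y$ and your bound as written is unavailable. Your argument still covers that case, but only after replacing the indicator bound by $|X_j(s)-X_j(s')|=\mathbf{1}_{\{X_j(s)\neq X_j(s')\}}$, which is valid because the $X_j$ are $\{0,1\}$-valued.
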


\begin{proof} 
{\em  
Remarking that $\sup_{t \in [0,T]} |X_j(t)| \leq q$ almost surely, we deduce, by Theorem 1.3.6 in \cite{Serfling1980}, that the trajectories are also continuous in the $L^2$ sense  (or mean square continuous) 
 when $\mathbf{H}_1$ is true, that is to say 
\begin{align}
\lim_{h \to 0} \mathbb{E}\left[ \left(X_j(t) - X_j(t+h) \right)^2\right] = 0,  \quad 
\forall t \in [0,T], \ \forall j \in \{1, \ldots, q\}.
\label{qmcont}
\end{align}

The continuity of $p_j$ and  $\gamma_{jj}$ is then a consequence of  Theorem 7.3.2 in~\cite{HE2015} which states that the mean $p_j$ and covariance functions $\gamma_{jj}$ are continuous if and only if $X_j$ is mean-square continuous. To prove the continuity of $\gamma_{j \ell}$, we note that for $(s,t)$ and $(s',t')$ in $[0,T]\times[0,T]$, we get thanks to the Cauchy-Schwarz inequality,
\begin{align*}
\left| \gamma_{j\ell}(s,t) - \gamma_{j\ell}(s',t') \right| & \leq  \left| \mathbb{C}ov (X_j(s) - X_j(s'),X_\ell(t)) \right| + \left| \mathbb{C}ov (X_j(s'),X_\ell(t) - X_\ell(t')) \right| \\
 & \leq \sqrt{ \gamma_{\ell \ell}(t) \mathbb{E}\left[ \left( X_j(s) - X_j(s')\right)^2\right]}  + \sqrt{ \gamma_{j j}(s) \mathbb{E}\left[ \left( X_\ell(t) - X_\ell(t')\right)^2\right]} 
\end{align*}
and we conclude using \eqref{qmcont}.
}
\end{proof}

Although stochastic processes with values in $\{0, 1\}$ can exhibit very smooth mean functions (see Appendix C in the Supplementary Material), their trajectories, being discontinuous due to jumps of size 1, are not mean square differentiable. This lack of regularity translates into non-differentiability of the covariance function near the diagonal $(t, t)$, as discussed in Chapter 11, \cite{Loeve1978}.

As in \cite{HE2015}, Chapter 7, we define the cross-covariance operator between $X_j$ and $X_\ell$ as the integral operator $\Gamma_{j \ell} : L^2[0,T] \to L^2[0,T]$, such that for all $f \in L^2[0,T]$,
\begin{align}
\Gamma_{j \ell} f(s) = \int_0^T \left( p_{j \ell} (s,t) - p_j(s) p_\ell(t)  \right) f(t) dt, \quad \forall s \in [0,T].
\label{defGammajl}
\end{align}
Note that since $p_{j \ell} (s,t) - p_j(s) p_\ell(t)$ is a continuous function,  $\Gamma_{j \ell} f$ is also a continuous function (see Lemma 4.6.1 in \citealp{HE2015}).

Let us introduce some notation. Consider the separable Hilbert space $L^2[0,T]$ 
equipped with the usual inner product, denoted by $\langle \cdot, \cdot \rangle$, 
and norm $\|\cdot\|$. We further define the vector space 
$\mathbb{H} = L^2([0,T], \mathbb{R}^q)$ as the separable Hilbert space of functions $\mathbf{f}= (f_1, \ldots, f_q)$ defined on $[0,T]$, with values in  $\mathbb{R}^q$ such that $\| \mathbf{f} \|_{\mathbb{H}}^2 = \sum_{j=1}^q \| f_j \|^2 < \infty$. For any $\mathbf{u} = (u_1, \dots, u_q)$ and 
$\mathbf{v} = (v_1, \dots, v_q)$ in $\mathbb{H}$, the inner product is 
defined component-wise as follows,
\begin{equation}
    \langle \mathbf{u}, \mathbf{v} \rangle_{\mathbb{H}} = \sum_{j=1}^q \langle u_j, v_j \rangle.
\label{psH}    
\end{equation}


 We denote by $\boldsymbol{\Gamma} : \mathbb{H} \to \mathbb{H}$ the multivariate integral  covariance operator as follows. For all $\mathbf{f} = (f_1, \ldots, f_q)  \in \mathbb{H}$,
\begin{align}
\boldsymbol{\Gamma} \mathbf{f}
 & = \sum_{\ell=1}^q 
 \begin{pmatrix}  \Gamma_{1\ell} f_\ell \\ \vdots \\  \Gamma_{q \ell} f_\ell \end{pmatrix}.
 \label{Gammawstp}
 \end{align}

The operator $\boldsymbol{\Gamma}$ can be seen as the integral operator with multivariate kernel function $\boldsymbol{\gamma}(s,t)$, with elements $\left( \gamma_{j \ell}(s,t) \right)_{j, \ell}$.
We deduce from Proposition~\ref{prop1} and the multivariate version of Mercer's theorem (see \citealp{CCY2014} or \citealp{HG2018}) that
there exists a set of continuous and orthonormal basis functions $\boldsymbol{\phi}_r = (\phi_{r1}, \ldots, \phi_{rq}) \in \mathbb{H}$, $r=1, 2, \ldots$ and corresponding eigenvalues $\lambda_1 \geq \lambda_2 \geq \cdots \geq 0$ such that
\begin{align}
\boldsymbol{\gamma}(s,t) &= \sum_{r \geq 1} \lambda_r \boldsymbol{\phi}_r(s) \boldsymbol{\phi}_r(t)^\top, \quad \forall (s,t) \in [0,T] \times [0,T],
\label{def:gammaMercer}
\end{align} 
where the sum converges absolutely and uniformly on $[0,T] \times [0,T]$.
This  leads to the following expansion of the covariance functions $\gamma_{j\ell}(s,t)$,
\begin{align}
  p_{j \ell} (s,t) - p_j(s) p_\ell(t)  &= \sum_{r \geq 1} \lambda_r \phi_{rj}(s) \phi_{r \ell}(t), \quad \forall (s,t) \in [0,T] \times [0,T],
 \label{KLgammajl}
\end{align} 
where the infinite sum converges uniformly on $[0,T] \times [0,T]$.

\begin{rem}\label{rem:rk1} The continuous time extension of correspondence analysis, named qualitative harmonic analysis and  developed by \cite{DEV1982} and \cite{DevSap1980} is based on the eigendecomposition of another integral operator, with a purpose that is not to expand the trajectories themselves in an ``optimal" way but to relate the states $S_j$  to numerical values at each instant $t$.
More precisely, the aim is to find an optimal encoding function $\varphi : \mathcal{S} \times [0,T] \to \mathbb{R}$ 
minimizing  the following criterion 
\begin{equation}
  \int_0^T \! \! \! \! \int_0^T \mathbb{E} \left[    \left( \varphi(Y(t),t) - \varphi(Y(s),s) \right)^2 \right]  ds  dt,
\label{def:ecodeDeville}  
\end{equation}
subject to identifiability constraints   $\mathbb{E}  \left[   \varphi(Y_t,t) \right] = 0$ for all $t \in [0,T]$
and unit variance, \\ $\int_{0}^T \mathbb{E}  \left[    \varphi(Y(t),t)^2 \right] dt  = 1$.
A solution $\varphi(x,t)$ satisfies, for all $x \in \mathcal{S}$, the integral operator equation (see equation (42) in \citealp{DEV1982}),
\begin{align}
 \sum_{\ell = 1}^q \int_0^T  \frac{p_{j \ell }(t,s)}{p_j(t)p_\ell(s)} \varphi(S_\ell,s) p_\ell(s) ds &=  \lambda \varphi(x,t), \quad \forall t \in [0,T].
 \label{CFDAdev82}
\end{align}
Departure from independence is evaluated via the ratio $\frac{p_{j \ell}(t,s)}{p_j(t)p_\ell(s)}$ which is equal to one in case of independence. Denoting by $\widetilde{\lambda}_1 \geq \widetilde{\lambda}_2 \geq \ldots \geq 0$ the sequence eigenvalues of operator equation~\eqref{CFDAdev82} and by $\boldsymbol{\varphi}_i(t) = (\varphi_i(S_1,t), \ldots,  \varphi_i(S_q,t) )$ the eigenfunction related to $\widetilde{\lambda}_i$, we get  a Mercer type expansion of the joint probabilities (see equation (43) in \citealp{DEV1982}),
\begin{align}
p_{j \ell}(t,s) &= p_j(t)p_\ell(s) \left( \sum_{r \geq 1} \widetilde{\lambda}_i \varphi_i(S_j,t) \varphi_i(S_\ell,s) \right). 
\label{cfda:kl}
\end{align}

This means that the optimal encoding approach considers implicitly  a multiplicative point of view to expand the departure from independence of the joint probabilities $p_{j \ell}(s,t)$ whereas the optimal trajectory expansion studied in this article considers an additive point of view, as seen  in \eqref{KLgammajl}.
\end{rem}

\begin{rem}
\label{rem:2} 
Note that it is possible to introduce positive  weights $w_1, \ldots, w_q$ in the inner product defined in \eqref{psH}:
\begin{equation*}
    \langle \mathbf{u}, \mathbf{v} \rangle_{\mathbb{H}} = \sum_{j=1}^q w_j \langle u_j, v_j \rangle.
\end{equation*}

These weights  can be chosen by the statistician  in order to  give the same importance to all the states, by considering for example
\begin{align}
w_j &= \frac{1}{\mbox{tr} \left( \Gamma_{jj} \right) }  \nonumber \\
 &= \left( \int_0^T p_j(t) \left( 1 - p_j(t) \right) dt \right)^{-1}, \label{weightstoone}
\end{align}
 setting to one the trace of the covariance operator $(w_j)^{-1} \Gamma_{jj}$ of the binary (normalized) trajectory  $X_j$ (see  the supplementary material for an illustration on the sensory data).
\end{rem}

\subsection{The functional point of view and multivariate PCA}
We now adopt another point of view and introduce the space $D([0,T], \mathbb{R}^q)$, a multidimensional version of the space $D([0,T])$,  the set of functions which have only jump discontinuities and are right continuous (see Chapter 14 in \citealp{Breiman1968} or Chapter 3 in \citealp{Bill1999}).

\begin{defn}
Let $D([0,T], \mathbb{R}^q)$ be the set of functions $\mathbf{x} : [0,T] \mapsto \mathbb{R}^q$ that are right-continuous and have left-hand limits,
\begin{itemize}
    \item for $0 \leq t < T$, $\mathbf{x}(t+) = \lim_{s \searrow t}\mathbf{x}(s)$ exists and $\mathbf{x}(t+)=\mathbf{x}(t)$,
    \item for $0 < t \leq T$, $\mathbf{x}(t-) = \lim_{s \nearrow t}\mathbf{x}(s)$ exists.
\end{itemize}
\end{defn}

As stated in  \cite{Breiman1968}, an element $\mathbf{x}$ of $D([0,T], \mathbb{R}^q)$ has at most countably many jumps, and for any $\delta >0$ and the set $\{ t \in [0,T] ; \| \mathbf{x}(t) - \mathbf{x}(t-) \| \geq \delta \}$ is finite, where $\| .\|$ is a norm on $\mathbb{R}^q$.

The following proposition allows to get a joint measurability result under the weak hypothesis that the trajectories belong to $D([0,T], \mathbb{R}^q)$.
Note that it is a general statement, that does not assume that the trajectories are drawn from TDS or TCATA experiments. It can be seen as a generalization of Theorem 7.4.2 in \cite{HE2015} which replaces the hypothesis that the trajectories $X(.,\omega)$ are continuous for all $\omega \in \Omega$ by the slightly weaker assumption that they belong to $D([0,T], \mathbb{R}^q)$.

\begin{prop}\label{prop:jointm}
Let $(\Omega,\mathcal F,\mathbb P)$ be a probability space and let $\mathbf{Z} : [0,T]\times\Omega \rightarrow \mathbb R^q$
be a stochastic process such that
\begin{itemize}
  \item for every $t\in[0,T]$, the random variable
  $\omega\mapsto \mathbf{Z}(t,\omega)$ is $\mathcal F$-measurable,
  \item for every $\omega\in\Omega$, the trajectory $\mathbf{Z}(.,\omega) : 
  t \in [0,T] \mapsto \mathbf{Z}(t,\omega)$ belongs to $D([0,T], \mathbb{R}^q)$.
\end{itemize}
Then $\mathbf{Z}$ is jointly measurable, that is,
\[
\mathbf{Z} : ([0,T]\times\Omega,\ \mathcal B([0,T])\otimes\mathcal F)
\rightarrow (\mathbb R^q,\ \mathcal B(\mathbb R^q))
\]
is measurable.
\end{prop}

\begin{proof} {\em The proof follows the same steps as the proof of Theorem 7.4.2 in \cite{HE2015}, which is stated for continuous trajectories. 
Let $(q_k)_{k\ge 1}$ be an increasing enumeration of
$\mathbb Q\cap[0,T]$ with $q_1=0$ and $q_k\uparrow T$.
For each $n\ge1$, choose a rational partition
\[
0=q_1<q_2<\cdots<q_{N_n}=1
\quad \text{such that} \quad
\max_{1\le k\le N_n-1} (q_{k+1}-q_k)\le 2^{-n},
\]
and consider the approximation process
\[
\mathbf{Z}_{(n)}(t,\omega) = \sum_{k=1}^{N_n-1} \mathbf{Z}(q_k,\omega)\,\mathbf 1_{[q_k,q_{k+1})}(t)
+ \mathbf{Z}(T,\omega)\,\mathbf 1_{\{T\}}(t).
\]

For each $k$, the mapping $(t,\omega)\mapsto \mathbf{Z}(q_k,\omega)\,\mathbf 1_{[q_k,q_{k+1})}(t)$
is measurable with respect to
$\mathcal B([0,T])\otimes\mathcal F$, since
$t\mapsto \mathbf 1_{[q_k,q_{k+1})}(t)$ is Borel measurable and
$\omega\mapsto \mathbf{Z}(q_k,\omega)$ is $\mathcal F$-measurable.
Hence $\mathbf{Z}_{(n)}$ is jointly measurable.

Fix $(t,\omega)\in[0,T]\times\Omega$.
There exists an index $k_n$ such that
$q_{k_n}\le t<q_{k_n+1}$.
By right-continuity of the trajectories belonging to $D([0,T], \mathbb{R}^q)$,
$t\mapsto \mathbf{Z}(t,\omega)$, we obtain, as $n$ tends to infinity,
$\mathbf{Z}(q_{k_n},\omega)\rightarrow \mathbf{Z}(t,\omega)$.
Therefore,  
\[
\mathbf{Z}_{(n)}(t,\omega)\rightarrow \mathbf{Z}(t,\omega).
\]
Since the pointwise limit of a sequence of measurable functions
with values in $\mathbb R^q$ is measurable,
it follows that $\mathbf{Z}$ is jointly measurable.}
\end{proof}

Under hypothesis $\mathbf{H}_1$ the components $p_j$ are continuous functions and $\mathbf{p} =(p_1, \ldots, p_q)$ belongs to $\mathbb{H}$. 
Next proposition states that the continuous time stochastic process $\mathbf{X}(t)$, that is continuous in probability and has its trajectories in $D([0,1], \mathbb{R}^q)$, can also be seen as a random element $\mathbf{X}$ taking values in $\mathbb{H}$. Furthermore, it is possible to identify,  its expectation in $\mathbb{H}$ with $\mathbf{p}$ and its covariance operator with the integral operator $\boldsymbol{\Gamma}$ defined in \eqref{Gammawstp}.
We denote by $\left( \boldsymbol{u} \otimes \boldsymbol{v} \right) \boldsymbol{\phi}= \langle  \boldsymbol{u}, \boldsymbol{\phi} \rangle_{\mathbb{H}} \boldsymbol{v}$ the tensor product of two elements of $\boldsymbol{u}$ and $\boldsymbol{v}$ of $\mathbb{H}$.

\begin{prop} Under hypothesis $\mathbf{H}_1$ and if, for all $\omega \in \Omega$, $\mathbf{X}(.,\omega)$ belongs to $D([0,T], \mathbb{R}^q)$, we have
\begin{itemize}
    \item $\mathbb{E}(\mathbf{X}) = \mathbf{p}$ in $\mathbb{H}$,
    \item $\mathbb{E} \left[ \left(\mathbf{X} -\mathbf{p}\right) \otimes \left(\mathbf{X} -\mathbf{p}\right) \right] = \boldsymbol{\Gamma}$. 
\end{itemize}
\label{def:trajfda}
\end{prop}

\begin{proof}
{\em 
First note that $D([0,T], \mathbb{R}^q) \subset \mathbb{H}$. By Proposition~\ref{prop:jointm}, $\mathbf{X}(t,\omega)$ is jointly measurable, and thus  $\mathbf{X}$ is a random element in $\mathbb{H}$. The result is then a consequence of hypothesis $\mathbf{H}_1$ and Theorem 7.4.3 in \cite{HE2015}.}
\end{proof}

 
We get with Proposition~\ref{def:trajfda}  that the operator $\boldsymbol{\Gamma}$, defined in \eqref{Gammawstp}, is the integral operator with multidimensional kernel function $\boldsymbol{\gamma}(s,t) = ( \gamma_{j \ell}(s,t))_{j,\ell=1, \ldots ,q}$. For $r \ge 1$, $\boldsymbol{\phi}_r$ is an eigenvector associated to the non negative eigenvalue $\lambda_r$,    $\boldsymbol{\Gamma} \boldsymbol{\phi}_r = \lambda_r \boldsymbol{\phi}_r$.
Furthermore, the total variance  satisfies
\[
\text{tr}(\boldsymbol{\Gamma}) = \sum_{r \geq 1} \lambda_r = \sum_{j=1}^q \int_{0}^T \gamma_{j,j}(t,t) dt = \mathbb{E} \left[ \| \mathbf{X} - \mathbf{p} \|_{\mathbb{H}}^2  \right] < \infty.
\]

The optimal linear expansion of $\mathbf{X} - \mathbf{p}$ in a $k$ dimensional vector space of $\mathbb{H}$, in terms of quadratic mean, is given by the truncated Karhunen-Loève expansion $\widetilde{\mathbf{X}}_k$ of $\mathbf{X}$,
\begin{align}
\widetilde{\mathbf{X}}_k(t) &= \mathbf{p} + \sum_{r=1}^k \langle  \mathbf{X} - \mathbf{p},  \boldsymbol{\phi}_r \rangle_{\mathbb{H}} \boldsymbol{\phi}_r(t), \quad \forall t \in [0,T].
\label{KLXk}
\end{align}

Our aim is to estimate $\mathbf{p}$ and $\boldsymbol{\phi}_r$, for $r=1, \ldots, k$, in order to be able to capture the main variations of a categorical random function $Y$ in a small $k$ dimensional vector space.


\section{Sampling and estimators} 

Suppose now we have a sample  of $n$ independent and identically distributed categorical stochastic processes $Y_1, \ldots, Y_n$ observed over $[0,T]$ and taking values in $\mathcal{S}= \{ S_1, \ldots, S_q \}$. 
For $i=1, \ldots, n$, we denote by $\mathbf{X}_i = (X_{i1}, \ldots, X_{iq})$ the vector of the $q$ corresponding binary trajectories, with $X_{ij}(t) = \mathbf{1}_{\{Y_i(t) \ni S_j\}}$. These trajectories are piecewise constant, with a finite number of jumps, and belongs to $D([0,T],\mathbb{R}^q)$. It is  a rare case in functional data analysis in which the trajectories can be observed exhaustively, that is to say for all time point $t \in [0,T]$.
As in Section 2, we can adopt two different perspectives to study properties of estimators based on $\mathbf{X_1}, \ldots, \mathbf{X}_n$, that of continuous time stochastic processes and that of functional data analysis.

\subsection{Stochastic process point of view}

We define, for all  $t$ in $[0,T]$, the empirical probabilities of occurrence  
\begin{align}
 \widehat{p}_j(t) &= \frac{1}{n} \sum_{i=1}^n X_{ij}(t),
 \label{def:empm}
\end{align}
and $\widehat{\boldsymbol{p}}(t) = (\widehat{p}_1(t), \ldots, \widehat{p}_q(t))$. We also define, for all $s$ and $t$ in $[0,T]$, the joint empirical probabilities 
\[
\widehat{p}_{j \ell}(s,t) = n^{-1} \sum_{i=1}^n X_{ij}(s) X_{i\ell}(t).
\]
The estimators of the covariance functions are $\widehat{\gamma}_{j}(t,t) = \widehat{p}_j(t)(1 - \widehat{p}_j(t))$
and 
\begin{align}
\widehat{\gamma}_{j\ell}(s,t) &= \widehat{p}_{j \ell} (s,t) - \widehat{p}_j(s) \widehat{p}_\ell(t).   
 \label{def:empG}
\end{align}

Note that even $p_j$ and $\gamma_{j \ell}$ are continuous functions under hypothesis $\mathbf{H}_1$, the estimators $\widehat{p}_j(t)$ and $\widehat{\gamma}_{j\ell}(s,t)$ have discontinuity points. As stated in the following property, the estimators $\widehat{p}_j(t)$ and $\widehat{\gamma}_{j\ell}(s,t)$ converge pointwise almost surely to $p_j(t)$ and $\gamma_{j\ell}(s,t)$

\begin{prop} Suppose that $H_1$ is true. As $n$ tends to infinity, we have for all $j \in \{ 1, \ldots, q \}$ and all $t \in [0,T]$, $\widehat{p}_j(t) \to p_j(t)$ almost surely. 

We also have, for all $\ell \in \{ 1, \ldots, q\}$ and all $s \in [0,T]$, $\widehat{\gamma}_{j\ell}(s,t) \to \gamma_{j\ell}(s,t)$ almost surely.
\label{propast}
\end{prop}
The proof of Proposition~\ref{propast} follows directly from the continuous mapping theorem and the strong law of large numbers for independent and identically distributed bounded random variables and is therefore omitted.

We can also get uniform convergence of the estimators of the mean functions $p_j(.)$  using similar arguments as \cite{YaoMullerWang2005}, \cite{LiHsing2010} or \cite{HE2015}, under additional smoothness assumptions on the mean trajectory. 
Furthermore, it follows that observing the trajectories for all $t \in [0,T]$ 
is not strictly required. Instead, a sufficiently dense grid of $r_n$ 
discretization points, combined with a nonparametric smoothing procedure allow to get parametric rates of converge,  up to a logarithmic factor,  according to uniform convergence, for the mean  under smoothness regularity conditions.

We denote by $T$ a random variable  with values in $[0,T]$ and strictly positive density function $f_T(.)$ on its support $[0,T]$. We first draw randomly $r_n$ observational points, $T_1, \ldots, T_{r_n}$ in $[0,T]$, by considering $r_n$ independent copies of $T$. 
We denote by $X_{ij,k}$ the value of $X_{ij}(T_k)$ for  $j=1, \ldots, q$, $i=1, \ldots, n$ and $k=1, \ldots, r_n$. We observe, for $j=1, \ldots, q$ and $i=1, \ldots, n$,
\begin{equation}
    X_{ij,k} = p_j(T_k) + \left( X_{ij}(T_k) - p_j(T_k) \right), \quad k=1, \ldots, r_n .
    \label{eq:discret}
\end{equation}

Estimation of $p_j(t)$ can be performed via local  linear smoothers, as in \cite{LiHsing2010}. Let $K(.)$ be a symmetric probability density function on $[-1,1]$ and $K_h(t)= \frac{1}{h} K\left( \frac{t}{h} \right)$, where $h>0$ is the bandwidth. An estimator of the mean function $p_j(t)$ is given by $\widehat{p}_j(t) = \widehat{a}_{j,0}$, where
\begin{align}
(\widehat{a}_{j,0},\widehat{a}_{j,1}) &= \arg \min_{a_0,a_1} \frac{1}{n r_n} \sum_{k=1}^{r_n} \sum_{i=1}^{n} \left[X_{ij,k}  - a_0 - a_1(T_k - t) \right]^2 K_{h_p} \left( T_k - t\right). 
\label{est:loclinpj}
\end{align}

Note that in our specific context, and in contrast to \cite{YaoMullerWang2005}, the observed values in \eqref{eq:discret} are not corrupted by a random noise $\epsilon$. 
The following property is a direct consequence of Theorem 3.1 and Corollary 3.2 
in \cite{LiHsing2010}. We omit the proof, noting that the moment conditions (C5) are satisfied since $|X_{ij,k}| \leq 1$.

\begin{prop}
\label{propunifpg}
Suppose that $0<m_T \leq f_T(t) \leq M_T < \infty$ for all $t \in [0,T]$ and that $f_T$ is differentiable with a bounded derivative. Suppose that the kernel function $K(.)$ is a symmetric probability function on $[-1,1]$, and is of bounded variation on $[-1,1]$. We also assume that $p_j(.)$ is twice differentiable and the second derivative is bounded on $[0,T]$. \\
Then if $(h_p r_n)^{-1} = O(1)$ and $h_p = O \left( (\log n / n)^{1/4} \right)$ as $n$ tends to infinity, we have
\[
\sup_{t \in [0,T]} \left|\widehat{p}_j(t) - p_j(t) \right| = O( \sqrt{\log n / n} ) \quad \text{ almost surely.}
\]
\end{prop}
We deduce that if the number of discretization points $r_n$ is sufficiently large, that is to say $n^{1/4} = O(r_n)$,   parametric rates for uniform convergence can be obtained for the estimators of $p_1, \ldots,  p_q$,
up to a logarithmic term.

\begin{rem}
\label{remarksplines}
Note that similar uniform consistency results as those stated in Proposition~\ref{propunifpg}, that do not require smoothness conditions on the individual trajectories,  could  also be obtained for estimators based on penalized regression splines. See \cite{Xiao2020} for details on the construction of the estimators of the mean function, as well as asymptotic uniform convergence properties.
\end{rem}

\begin{rem}
The consistent estimation of the covariance function is not directly possible by such smoothing approaches since the required differentiability assumptions are not fulfilled.
 Stochastic processes with values in $\{0, 1\}$ are not mean square differentiable (see the example given in Appendix C in the Supplementary Material). This lack of regularity translates into non differentiability of the covariance function near the diagonal $(t, t)$, as discussed in Chapter 11, \cite{Loeve1978}.
\end{rem}

\subsection{Functional data point of view}

We deduce from \eqref{def:empm}, \eqref{def:empG} and Proposition~\ref{def:trajfda}, the estimators $\widehat{\Gamma}_{j\ell}$ of the cross-covariance operators $\Gamma_{j \ell}$ and  an estimator $\widehat{\boldsymbol{\Gamma}}$ of $\boldsymbol{\Gamma}$. Note that in a more formal way, we can express
\begin{align*}
\widehat{\boldsymbol{p}} & = \frac{1}{n} \sum_{i=1}^n \boldsymbol{X}_i, \\
\widehat{\boldsymbol{\Gamma}} &= \frac{1}{n} \sum_{i=1}^n \boldsymbol{X}_i \otimes \boldsymbol{X}_i - \widehat{\boldsymbol{p}} \otimes \widehat{\boldsymbol{p}} .
\end{align*}

We can state the following consistency and asymptotic normality results which follow immediately from assumption $\mathbf{H}_1$ and Proposition~\ref{def:trajfda}. We denote by $\text{HS}(\mathbb{H})$ the vector space of Hilbert-Schmidt operators on $\mathbb{H}$,  by $\langle.,.\rangle_{\text{HS}}$ the inner product in $\text{HS}(\mathbb{H})$ and $\Gamma \otimes_{\text{HS}} \Delta =  \langle\Gamma,. \rangle_{\text{HS}} \Delta$.

\begin{prop} Suppose that hypothesis $\mathbf{H}_1$ is fulfilled and that for almost all $\omega \in \Omega$, $\mathbf{X}(.,\omega)$ belongs to $D([0,T], \mathbb{R}^q)$, 
then as $n$ tends to infinity, 
\begin{itemize}
\item $\left\| \widehat{\mathbf{p}} - \mathbf{p} \right\|_{\mathbb{H}} \to 0$  almost surely, and 
$\sqrt{n} \left( \widehat{\mathbf{p}} - \mathbf{p} \right) \overset{d}{\longrightarrow} Z$ in $\mathbb{H}$, where $Z$ is a Gaussian random element with mean zero and covariance operator $\boldsymbol{\Gamma}$. 

\item $\left\| \widehat{\boldsymbol{\Gamma}} - \boldsymbol{\Gamma} \right\|_{\text{HS}} \to 0$ almost surely, and  
$\sqrt{n} \left( \widehat{\boldsymbol{\Gamma}} - \boldsymbol{\Gamma} \right) \overset{d}{\longrightarrow} \mathcal{Z}$ in $\text{HS}(\mathbb{H})$,
where $\mathcal{Z}$ is a Gaussian random element with mean zero and covariance operator
\[
\mathbb{E}\left[ \left( (\mathbf{X}-\mathbf{p}) \otimes (\mathbf{X}-\mathbf{p}) -\boldsymbol{\Gamma} \right) \otimes_{\text{HS}} \left( (\mathbf{X}-\mathbf{p}) \otimes (\mathbf{X}-\mathbf{p}) -\boldsymbol{\Gamma} \right) \right].
\]
\end{itemize}
\label{prop:asymp}
\end{prop}

\begin{proof} of Proposition~\ref{prop:asymp} \\ 
{\em 
First note that $\mathbb{H}$ equipped with the inner product $\langle .,. \rangle_{\mathbb{H}}$ is a separable Hilbert space, thanks to Proposition~\ref{def:trajfda}, $\mathbf{X}$ is well defined as a random elements in $\mathbb{H}$.
We clearly have, for each $\omega \in \Omega$,  
\begin{align*}
\| \mathbf{X}(.,\omega) \|_{\mathbb{H}}^2 &=  \sum_{j=1}^q\|X_j(.,\omega)\|^2 \\
& \leq q T , 
\end{align*}
so that all the moments of $\| \mathbf{X} \|_{\mathbb{H}}$ are finite. The proposition is then a direct consequence of  Theorems 8.1.1 and 8.1.2 in \cite{HE2015} that are stated in general separable Hilbert spaces, considering  the empirical mean $\widehat{\mathbf{p}} = \frac{1}{n}\sum_{i=1}^n \boldsymbol{X}_i$ and the empirical covariance operator 
$\widehat{\boldsymbol{\Gamma}} = \frac{1}{n}\sum_{i=1}^n \boldsymbol{X}_i \otimes \boldsymbol{X}_i - \widehat{\mathbf{p}} \otimes \widehat{\mathbf{p}}.$}
\end{proof}

We can deduce from Proposition~\ref{prop:asymp} the consistency and the asymptotic normality of the estimators of the eigenvalues $\widehat{\lambda}_r$, $r=1, \ldots, k$  (see \citealp{DauxoisPR82}  or  Theorem 9.1.3 in \citealp{HE2015}). Under the additional assumption that the eigenvalues are distinct, we can also get the consistency and the asymptotic normality of the estimated eigenfunctions  $\widehat{\boldsymbol{\phi}}_r(t)$, $r=1, \ldots, k$.

\section{An illustration in sensory analysis}
\label{section:application}

The dataset used for illustration deals with sensory experiments. It is well documented in \cite{BNV2023} and can be obtained from a public source. Computations were performed with the library \texttt{MFPCA} (see \citealp{Happ2022}) in the \textbf{R} language  \cite{R2024}. All codes are available on Github, {\small \url{https://github.com/Chemosens/ExternalCode/tree/main/MFPCAWithCategoricalTrajectories}}.

\begin{figure}[!t]
\centering
\includegraphics[scale=.5]{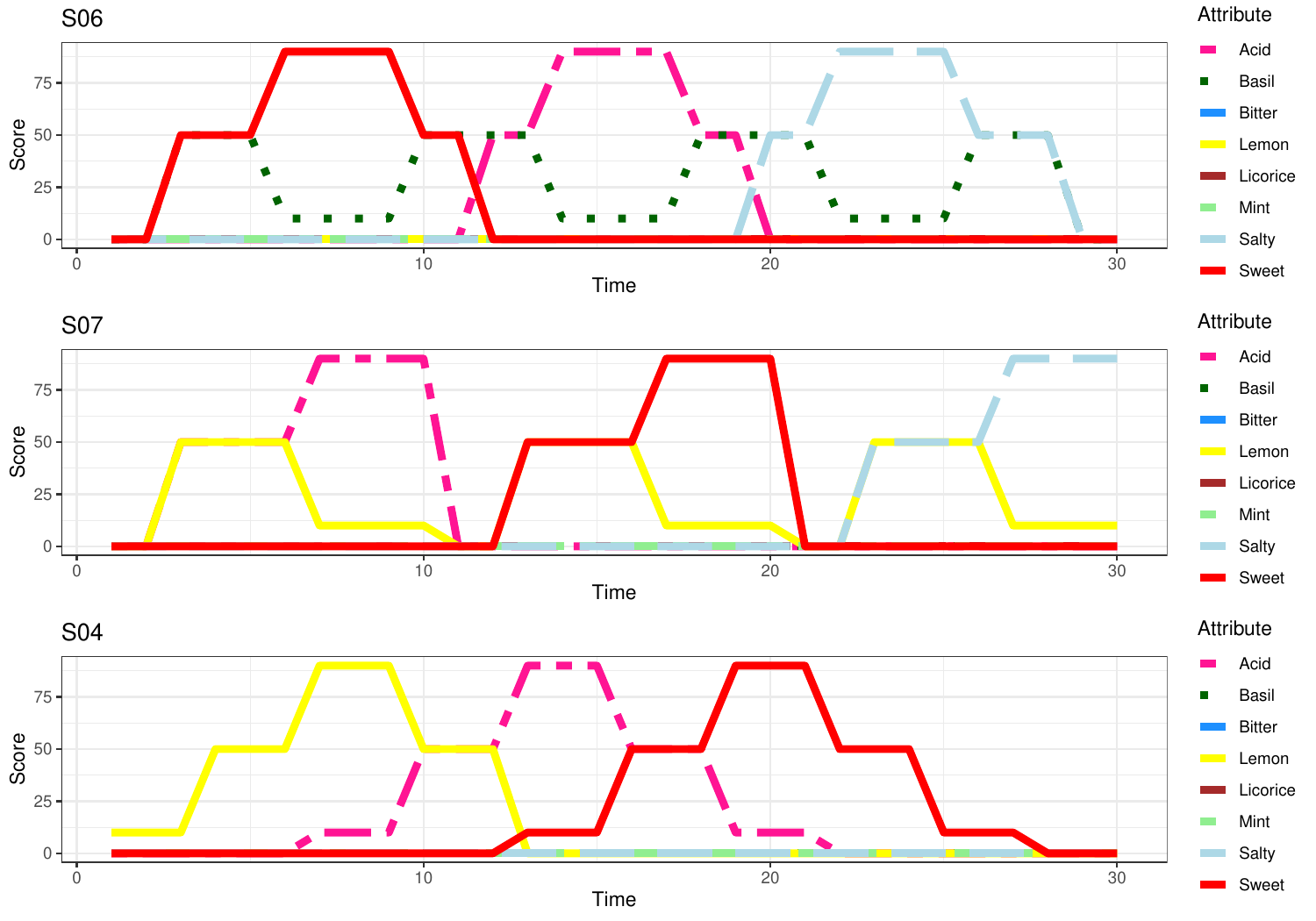}

\caption{Three gustometer-controlled stimuli (S06, S07 and S04) extracted from the open data basis~\cite{BNV2023}.}
\label{fig:Signaux}       
\end{figure}

Fifty participants, called panelists, took part in a tasting experiment and were asked to click on the sensation they perceived in real time from a list of descriptors.
When participants are instructed to click only on the dominant sensation, i.e. when only one perception can be observed at any given time, the protocol is called ``Temporal Dominance of Sensation” (TDS, see \citealp{Pineau_2009} for a reference article). 
When  participants are asked to click on all the sensations they perceive in real time from the same list of descriptors, the protocol is called Temporal Check-All-That-Apply (TCATA, see \citealp{CAS2016} for a seminal article). Unlike the TDS protocol, several descriptors (or none at all) can be selected at any given time for TCATA experiments.
For both cases (TDS and TCATA), the resulting data consists of binary trajectories linked to each state, with the value 0 when a descriptor is unclicked at time $t$ and 1 when it is clicked. The difference between TCATA and TDS data is that at each instant $t$, the sum of all binary trajectories can be different from 1 and take values in the set $\{0,1, \ldots, q\}$ for TCATA whereas the sum is always equal to 1 for TDS. We focus in the following on the TDS analysis (Similar results on TCATA data are reported in the Supplementary Material).

In all the experiments, the tasted temporal solution is  controlled and delivered by a gustometer. Three controlled sensory signals (see Figure~\ref{fig:Signaux}) were tasted by the panelists: S04 (Lemon, followed by Acid, and finally Sweet), S06 (Sweet, Acid, and finally Salty, with a continuous hint of Basil), and S07 (Acid, followed by Sweet, and finally Salty, with a continuous hint of Lemon). The descriptor list included Acid, Sweet, Lemon, Basil, and Salty, alongside distractors such as Mint, Licorice and Bitter, so that the categorical process $Y$ has  $q=8$ states.

The overall number of experiments is $n=150$ and time has been normalized to be $[0,1]$ for each experiment, resulting, for TDS, in the observed categorical trajectories drawn in Figure~\ref{fig:TDS}. 
The average number of jumps per TDS evaluation is 4, with a minimum of 1 and a maximum of 9.
 
 This dataset is particularly relevant because it relies on controlled data, enabling the assessment of result consistency. In contrast, in sensory analyses using conventional products, the ground truth is unknown, making the evaluation of the statistical power of the method more difficult. Here, the use of known stimuli is crucial for validating the interpretation of the results against theoretical expectations.
\subsection{Temporal dominance of sensation (TDS) trajectories}
\label{sect:TDS}

\begin{figure}[t]
\centerline{ \includegraphics[scale=.55]{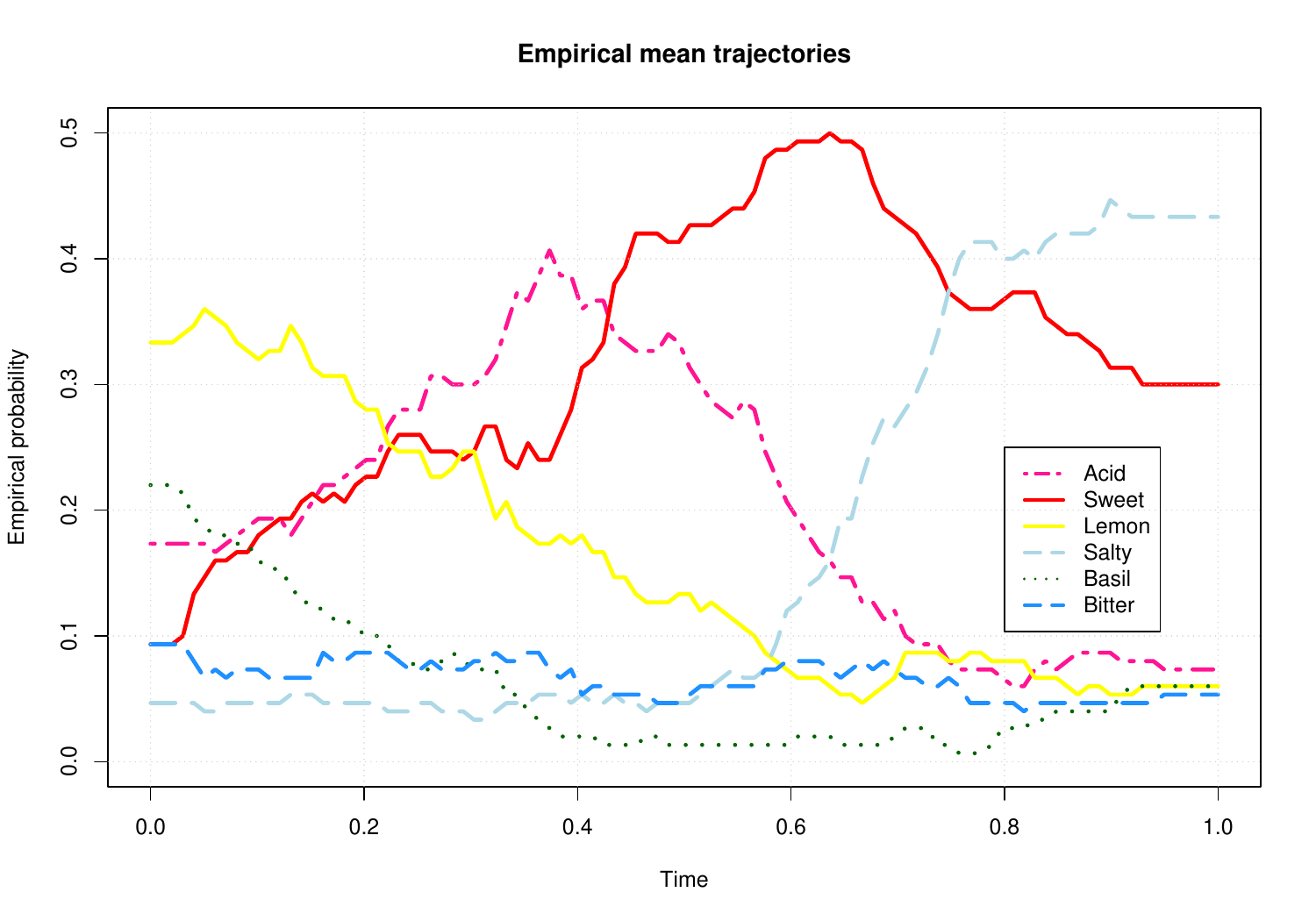}}
%
%
\caption{Empirical probabilities $\widehat{p}_j(t)$, $t \in [0,1]$. The curves are drawn only for  the states $j$ whose average probability of occurrence is larger than 5\%, that is to say $\int_0^1 \widehat{p}_j (t) dt \geq 0.05$.}
\label{fig:meantraj}    
\end{figure}

Most statistical analyses for TDS data  are, until now, based on the examination of the evolution over time of the proportion $\widehat{p}_j(t)$ of occurrence of each state $j$ (see \citealp{Pineau_2009}), as shown in Figure~\ref{fig:meantraj}.
The interpretation is then generally based on the succession of the most frequently observed states, and would lead to say, in that experiment, that the first dominant sensation is Lemon, followed by Acid, Sweet and to finish by Salty.
As seen below, the signal is much more complex since we are, in this controlled experiment, in presence of three different subpopulations (S04, S06 and S07), which are not detected with the average approach that does not take account of individual temporal variations.

We present in the following Sections analyses based on the MPFCA approach introduced in this paper as well as the results of  the continuous time correspondence analysis proposed by \cite{DEV1982}.

\subsection{Analysis based on multivariate functional principal components}

A multivariate FPCA of the trajectories $\mathbf{X}_i$, $i=1, \ldots, n$,  considering equal weights $w_j = 1/q$ for $j=1, \ldots, q$ has been performed, giving the same weight to all  states $S_1, \ldots, S_q$ with a discretization of 100 points and a basis of 
eight B-splines, of order 3, with equispaced interior knots (see the Supplementary Material for the analysis considering the unequal weights scheme given in \eqref{weightstoone}). 
The first eigenvalue $\lambda_1$ represents  23\% of the total variance, whereas the second one  $\lambda_2$ represents 11\% of the total variance, the third one  7\% and the fourth one 6\%. The decrease of the eigenvalues to zero is rather slow (see Figure~\ref{valpMFPCA}), which is not so surprising  since the $\mathbf{X}_i$ trajectories are not continuous.

\begin{figure}[t]
\centerline{
\includegraphics[scale=.55]{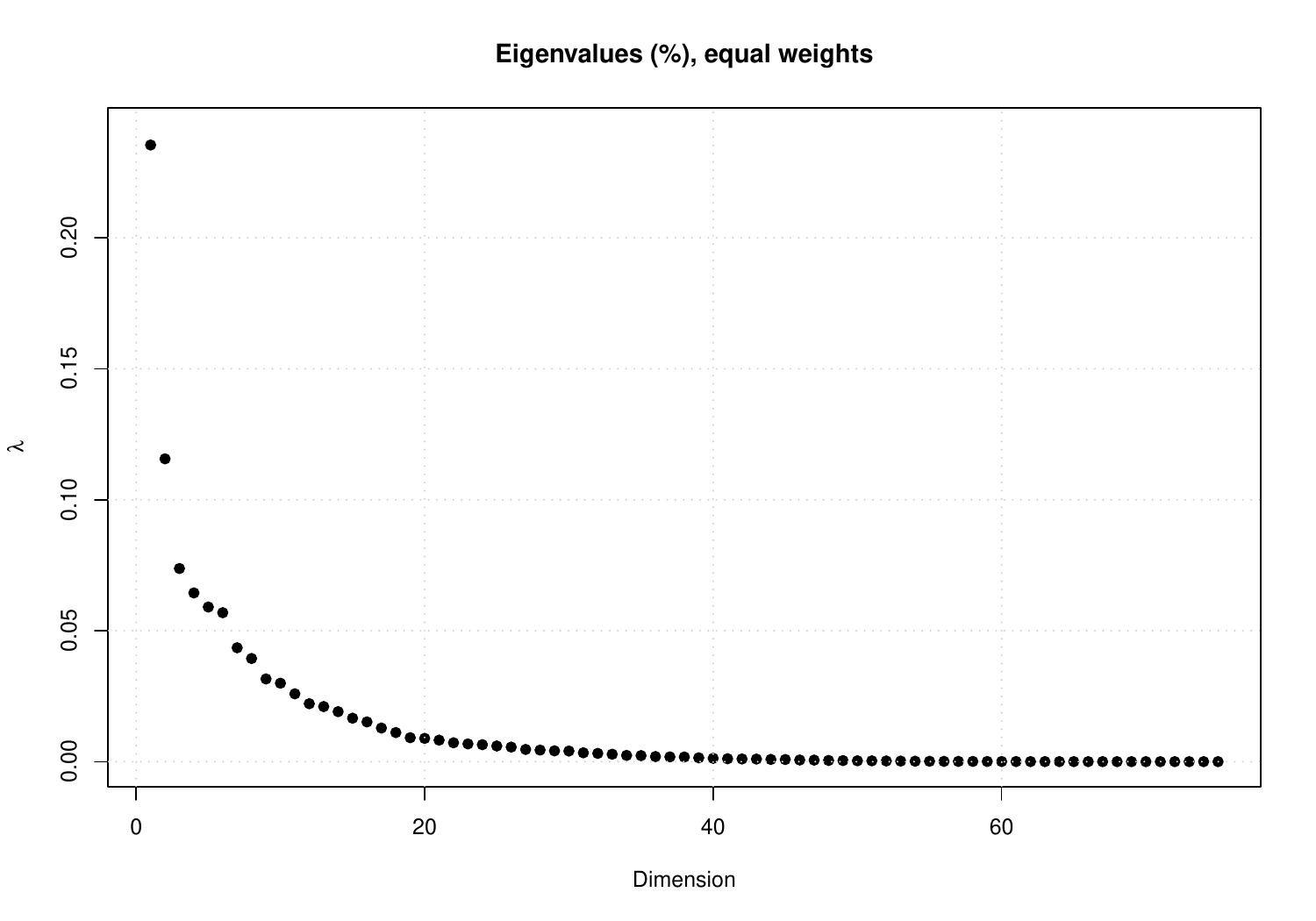}}
\caption{Proportion of total variance captured by the principal components in $\mathbb{H}$.}
\label{valpMFPCA}     
\end{figure}

\begin{figure}[t]
\centering
\includegraphics[scale=.5]{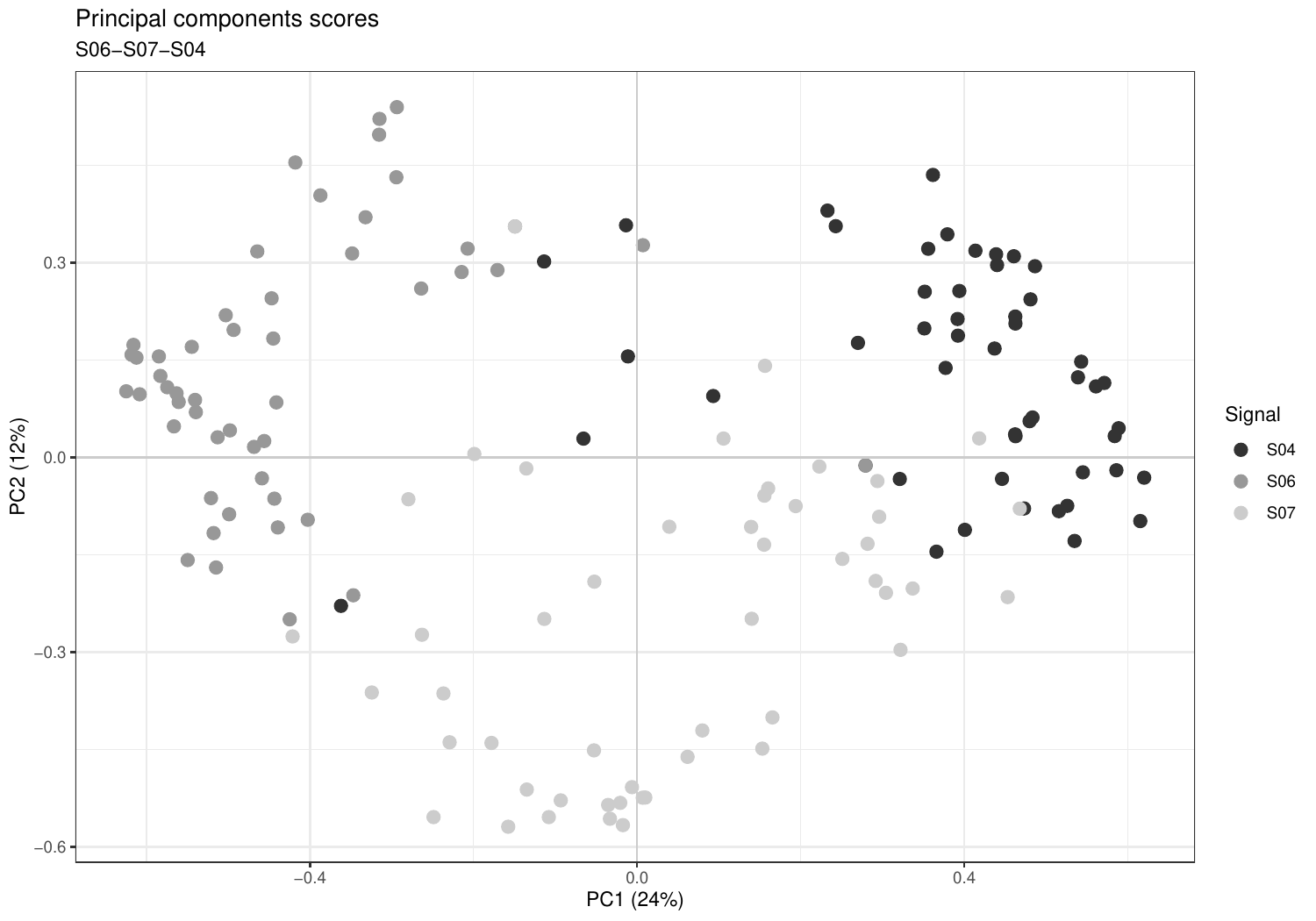}
\caption{Estimated principal component scores. Different gray levels are used to distinguish the observations according to the set of gustomer-controlled stimuli.}
\label{fig:pc}      
\end{figure}

The estimated principal component scores $\langle \mathbf{X}_i - \widehat{\mathbf{p}}, \widehat{\boldsymbol{\phi}}_r \rangle_{\mathbb{H}}$, for $r=1, 2$ are drawn in Figure~\ref{fig:pc}, whereas the main variations around the mean functions, for the first two dimensions, are drawn in Figure~\ref{fig:vp1} and Figure~\ref{fig:vp2}. 
For better interpretation and graphical representation, we only consider the components $j$ with the largest variations, that is to say with the largest values of $\| \widehat{\phi}_{rj} \|$, $j =1, \ldots, q$. Since, for each value of $r$, $\|\widehat{\boldsymbol{\phi}}_r\|_{\mathbb{H}}^2 =1$, we can build the following indicator of importance of each category $j$ in dimension $r$,
\begin{align}
\mbox{imp}_{rj} &= w_j \|\widehat{\phi}_{rj} \|^2, 
\label{def:varimp}
\end{align}
with $\sum_{j=1}^q \mbox{imp}_{rj} = 1$, and consider only the most important variables (see Table~\ref{tab:var_imp}). This leads us to select for graphical representation of the eigenfunctions the states Acid, Lemon, Salty and Sweet for the first dimension and Acid, Basil, Salty and Sweet for the second dimension.

\begin{table}[!t]
\centering
\begin{tabular}{rrrr}
  \hline
 & dim 1 & dim 2 & dim 3 \\ 
  \hline
Acid & 0.08 & 0.26 & 0.42  \\ 
  Basil & 0.04 & 0.07 & 0.00 \\ 
  Bitter & 0.00 & 0.02 & 0.01  \\ 
  Lemon & 0.10 & 0.02 & 0.48 \\ 
  Licorice & 0.00 & 0.00 & 0.00 \\ 
  Mint & 0.00 & 0.00 & 0.00  \\ 
  Salty & 0.22 & 0.30 & 0.02  \\ 
  Sweet & 0.56 & 0.34 & 0.06  \\ 
   \hline
\end{tabular}
\caption{Importance of the different states on each dimension of the MFPCA.}
\label{tab:var_imp}
\end{table}

The results are simple to interpret. For example, the black dots  in Figure~\ref{fig:pc}, corresponding to the S04 experiment, are characterized by a first principal component taking positive values, related, as seen in Figure~\ref{fig:vp1}, to a high  probability of occurrence of Lemon and small probability of occurrence of Sweet at the beginning of the period, and a high  probability of occurrence of Sweet and a small probability of occurrence of Salty at the end of the period. 
These results on Lemon and Sweet are in agreement with the original signal S04, that was first Lemon, then Acid and finally Sweet (see Figure~\ref{fig:Signaux}). It  was the only controlled signal to be not salty at the end of the tasting. Regarding the Acid attribute, it is also present in S06 (middle of tasting) and S07 (beginning of tasting) and this is why it is not highlighted in the first principal components. 

This is the opposite situation for the S06 experiment whose observations (in grey dots) are related to a negative value of the first principal component. The light grey dots, that correspond to the S07 experiment are characterized by negative values of the second principal components. A look at Figure~\ref{fig:vp2} indicates that negative values on the second axis correspond to high probability to be Acid at the beginning of the tasting, Sweet in the middle and Salty at the end. These are exactly the main features of the S07 signal (see Figure~\ref{fig:Signaux}). 

\begin{figure}[t]
\includegraphics[scale=.65]{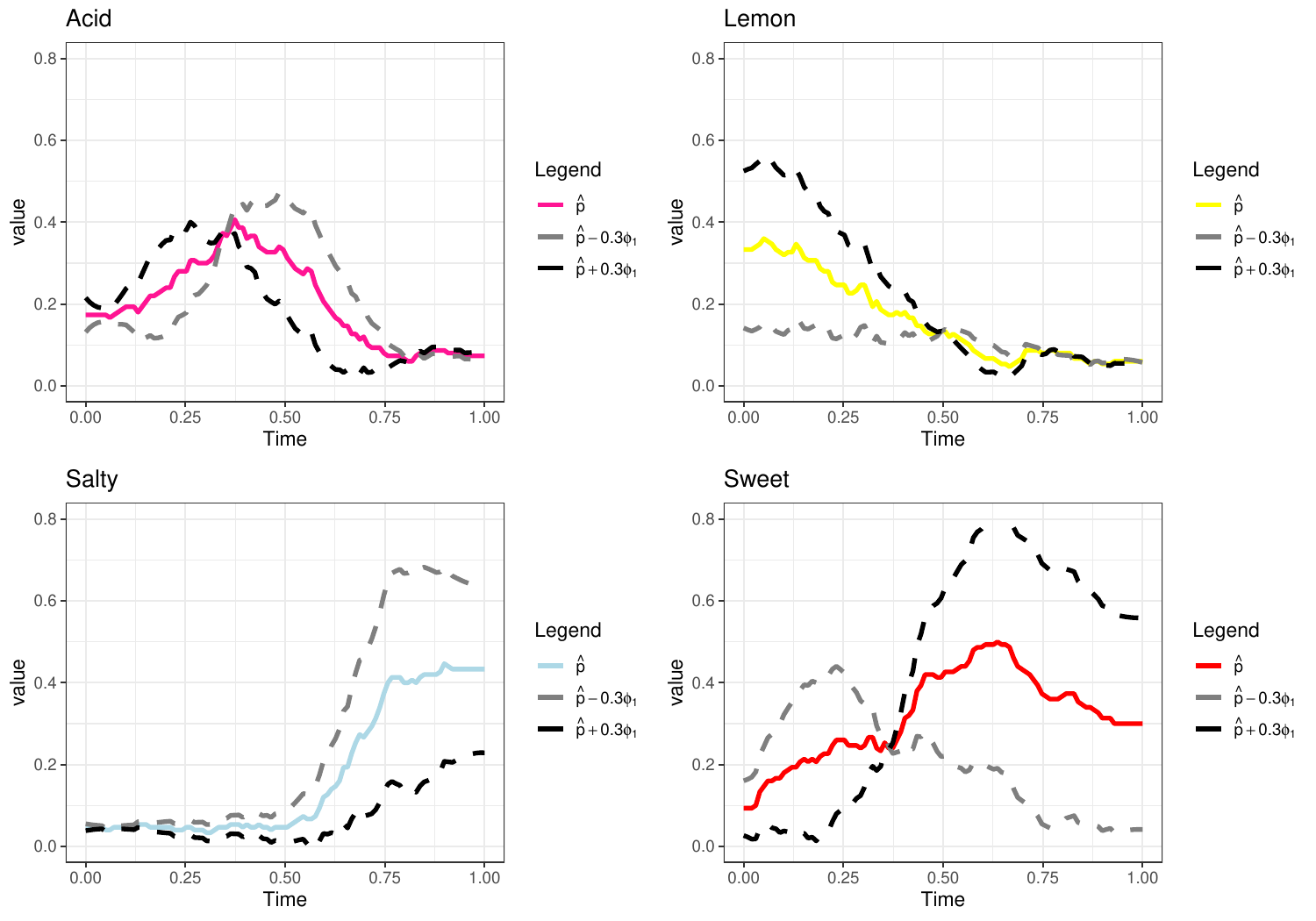}
\caption{MFPCA. First component. Variations around the  probability of occurrence related to the first component of the Karhunen-Loève expansion,  for the most important categories.}
\label{fig:vp1}      
\end{figure}

\begin{figure}
\centering
\includegraphics[scale=.65]{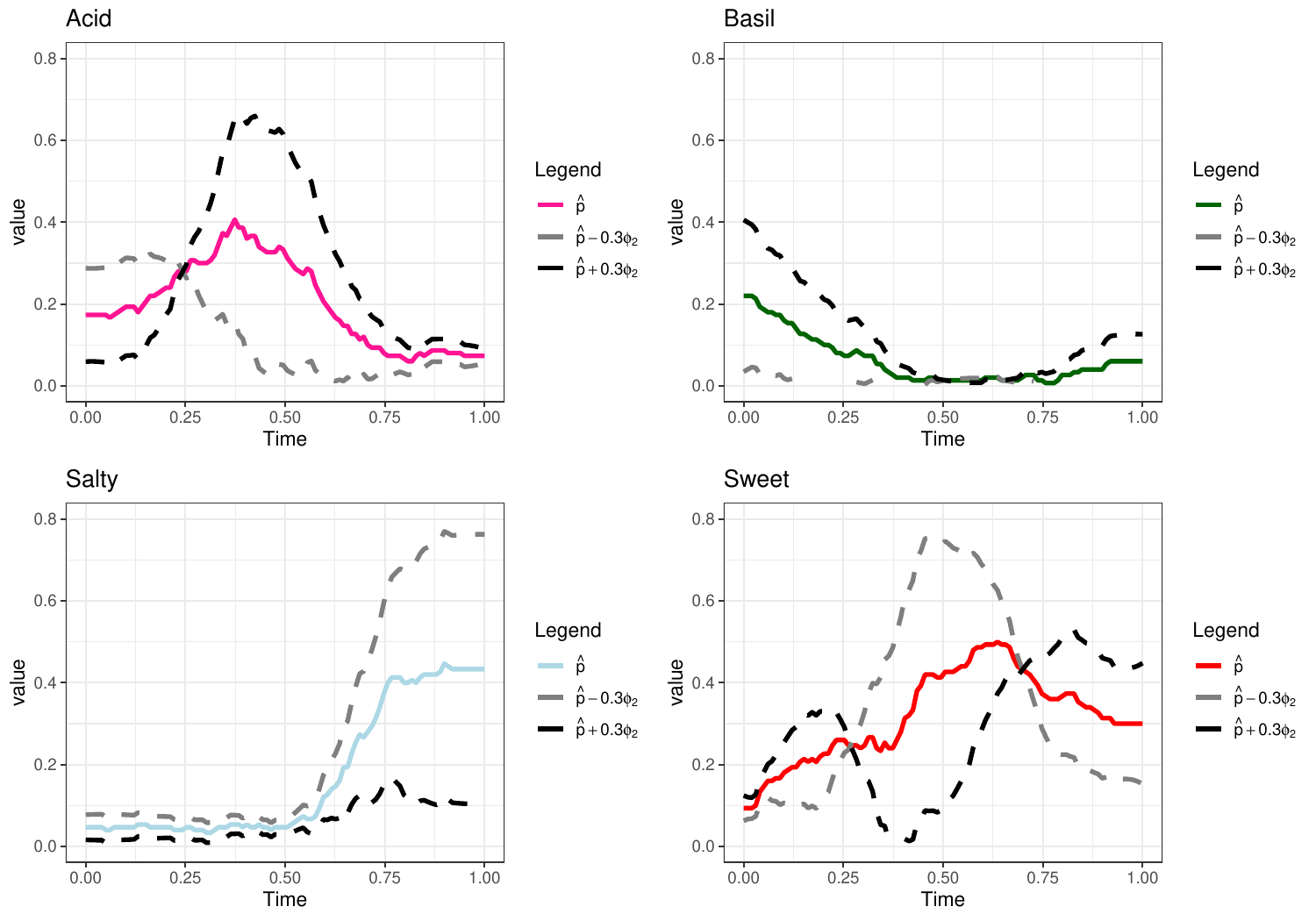}
\caption{MFPCA. Second component. Variations around the  probability of occurrence related to the second component of the Karhunen-Loève expansion,  for the most important categories.}
\label{fig:vp2}      
\end{figure}

\subsection{Analysis based on continuous time correspondence analysis}

An alternative approach in the literature for analyzing TDS data is Categorical Functional Data Analysis (CFDA) described in Remark~\ref{rem:rk1} and which is based on a generalization to continuous time of correspondence analysis. In this study, we employed the  R package cfda \citep{PGV2021}, considering a basis of eight B-splines (of order 3, with equispaced interior knots) to expand the encoding functions.

One limitation of this approach is that it cannot handle events with a zero probability of occurrence, as illustrated in \eqref{cfda:kl}. In practice, this means that it is impossible to compute the encoding function for a specific state at time instants when that state is never selected, an issue observed in Figure~\ref{fig:cfda_harm1} and Figure~\ref{fig:cfda_harm2} for the state Mint and $t \in [0.5, 0.9]$.
Moreover, in our opinion, interpretation is more complicated due to the multiplicative decomposition of the individual trajectories, as seen in \eqref{cfda:kl}. In contrast, deviations from independence in MFPCA rely on an additive decomposition, as seen is \eqref{KLgammajl}, which allows for a more direct interpretation in terms of probability of occurrence.

\begin{figure}
\centering
\includegraphics[scale=.95]{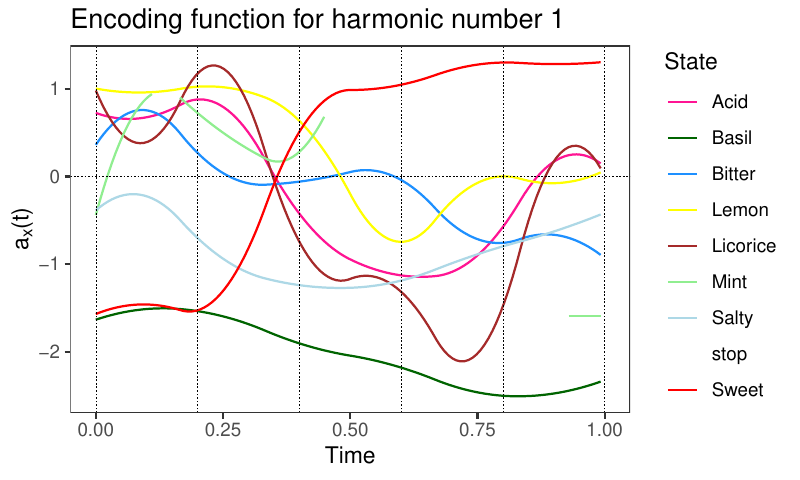}
\caption{Continuous time correspondence analysis. Evolution over time of the first encoding functions (see equation \eqref{def:ecodeDeville}). Note that these functions are not defined when the proportion of occurrence is equal to zero, as seen for the state Mint when $t \in [0.5, 0.9]$.}
\label{fig:cfda_harm1}
\end{figure}

\begin{figure}
\centering
\includegraphics[scale=.95]{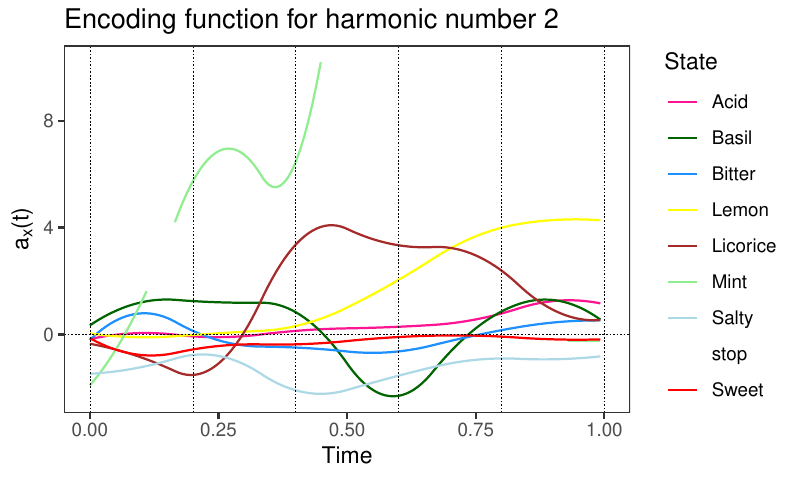}
\caption{Continuous time correspondence analysis. Evolution over time of the first encoding functions (see equation \eqref{def:ecodeDeville}). Note that these functions are not defined when the proportion of occurrence is equal to zero, as seen for the state Mint when $t \in [0.5, 0.9]$.}
\label{fig:cfda_harm2}
\end{figure}

The individual scores computed by the cfda R package are drawn, for the first two dimensions, in Figure~\ref{fig:cfda_ind}.  It reveals that the first axis clearly distinguishes signal S06 (on the left) from signal S04 (on the right), as it was the case for the first principal component scores of MFPCA (see Figure~\ref{fig:pc}). Indeed,  the linear correlation between these two scores is equal to 0.93.

\begin{figure}
\centering
\includegraphics[scale=.85]{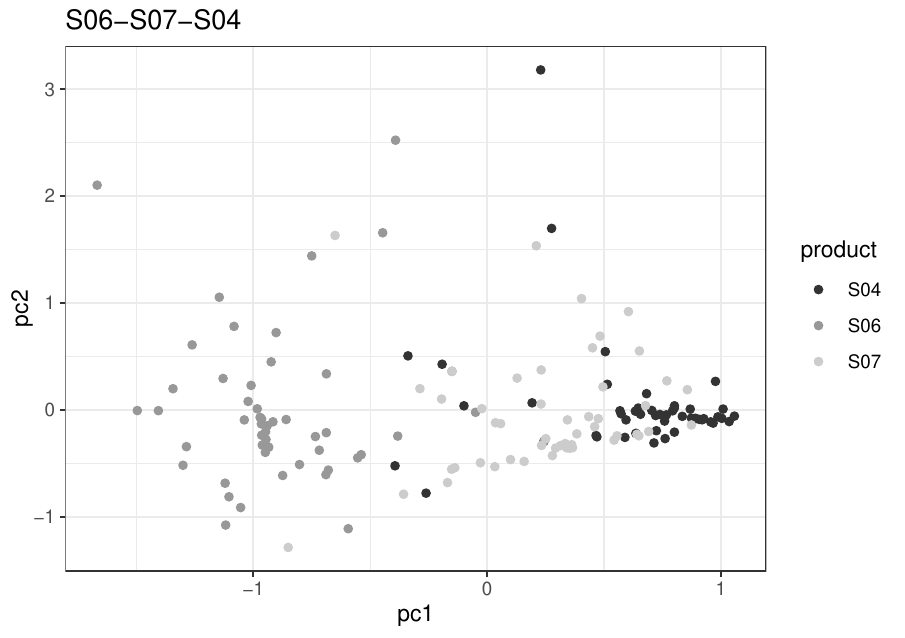}
\caption{CFDA. Individual scores on the first two components.}
\label{fig:cfda_ind} 
\end{figure}

The interpretation of the individual scores based on the encoding functions drawn in Figure~\ref{fig:cfda_harm1} and ~\ref{fig:cfda_harm2} is not so easy and, as it was already done in \cite{DEV1982}, we present on the same map in Figure~\ref{fig:cfda_encoding}, the values of the encoding functions on the first two dimensions, at six different instants in time. 
In this graphical representation, proximity between points indicates a higher probability of co-occurrence, while greater distance suggests a lower probability, as in  correspondence analysis.  
When examining the large values along the first axis, which primarily correspond to trajectories associated with experiment S04, we observe a proximity between Lemon and Licorice at time $t=0$, followed by Acid at $t=0.2$  and Sweet at $t \in \{0.6, 0.8, 1\}$. This pattern suggests that signal S04 is initially perceived as Lemon and Licorice, then transitions to Acid, and is ultimately perceived as Sweet toward the end of the evaluation. This interpretation aligns well with the actual signals, with the exception of Licorice, which served as a distractor and was perceived by only a small fraction of the panelists.
In contrast, signal S06, associated with negative scores on the first axis, is perceived as Sweet at the beginning of the tasting, which is consistent with the real signals.

As illustrated in Figure~\ref{fig:cfda_harm2}, the highest values of the encoding functions on the second dimension are predominantly linked to the Mint descriptor, which is positioned in the upper part of the plot. Since Mint was a distractor in this task, its presence confirms that the trajectories associated with this descriptor can be considered  as outliers. This observation highlights the sensitivity of CFDA to rare events, much like in correspondence analysis.
In the context of CFDA, the second axis appears to primarily capture outlying trajectories, whereas in MFPCA, it distinguishes signal S07 from the other experiments.

\begin{figure}
\centering
\includegraphics[scale=.85]{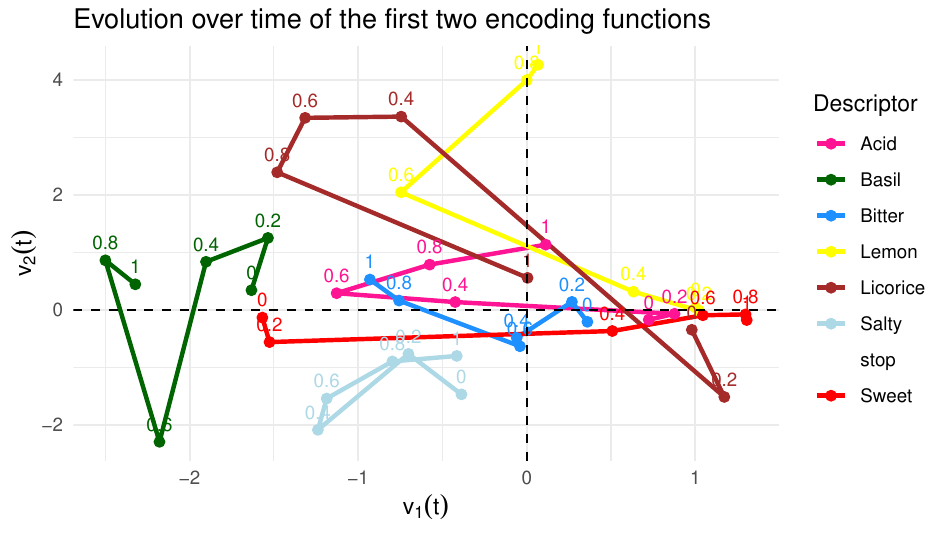}
\caption{Continuous time correspondence analysis. Map of the values of the first two encoding functions for $t \in \{0, 0.2, 0.4, 0.6, 0.8, 1\}$.}
\label{fig:cfda_encoding}
\end{figure}

\section{Concluding remark}

The  way, presented in this work,  of reducing the dimension in a vector space of a panel of categorical trajectories allows for simple interpretation and comparison of individual trajectories.  That statistical methodology can be useful to describe the trajectories related to other experimental protocols such as Temporal Check-All-That-Apply, in which $X_j(t)$ and $X_\ell(t)$, $\ell \neq j$, can be both equal to one at the same time $t$, whereas this would require to increase considerably the number $q$ of states with continuous time correspondence analysis and Markov processes approaches.
It can be easily extended to multivariate categorical trajectories and one could consider simultaneously, in our example,  the three experiments made on the same panelists, that is to say $\mathbf{Y} = (Y_{S04}, Y_{S06}, Y_{S07})$.
Finally, this dimension reduction approach can be useful to build predictive models, such as scalar-on-categorical functional data regression models, permitting to use categorical trajectories as explanatory variables in statistical models via the principal components scores. In the example presented in Section~\ref{section:application}, it is not difficult to predict, with high accuracy, what is the underlying tasting experiment, among S04, S06 and S07, with a simple  linear or quadratic discriminant analysis based on  the principal components scores. 

%
\paragraph{Acknowledgement.}
The authors would like to express their sincere gratitude to the two anonymous reviewers for their valuable comments and suggestions, which greatly contributed to improving the initial version of this manuscript.
The Institut de Mathématiques de Bourgogne (UMR UBE-CNRS 5584) receives support from the EIPHI Graduate School (contract ANR-17-EURE-0002).

\bibliographystyle{apalike}

\clearpage
 
\addcontentsline{toc}{section}{Appendix} \appendix

\begin{center}
\Large \textbf{Statistical description and dimension reduction of categorical trajectories with multivariate functional principal components}
\end{center}

\begin{center}
\Large Supplementary Material
\end{center}

\begin{center}
Hervé \textsc{Cardot} and Caroline \textsc{Peltier} 
\end{center}


\section{Analysis of TDS data considering weighted MFPCA}

We  also consider weights, as those defined in \eqref{weightstoone} in Remark~\ref{rem:2}, and equal to 
\[
w_j = \left( \int_0^1 \widehat{p}_j(t) (1- \widehat{p}_j(t)) dt \right)^{-1}.
\]
Introducing such weights in the inner product in $\mathbb{H}$ gives more importance to the states whose average variance is small, that is to say that are often or very rarely observed (see $w_{p(1-p)}$ in Table~\ref{tab:w} for the numerical values). As noted in  Remark~\ref{rem:2}, considering these weights lead to impose the covariance operators $\Gamma_{jj}$ to have the same trace. As seen in Figure~\ref{valpMFPCAw} on the right, the decrease to zero of the sequence of eigenvalues $\lambda_r$ is slower compared to previous analysis with equal weights. We draw in Figure~\ref{fig:pcW} the first two principal components.

\begin{table}[!h]
\centering
\begin{tabular}{lrrrrrrrr}
  \hline
 & Acid & Basil & Bitter & Lemon & Licorice & Mint & Salty & Sweet \\ 
  \hline
$w_e$ & 0.12 & 0.12 & 0.12 & 0.12 & 0.12 & 0.12 & 0.12 & 0.12 \\ 
  $w_{p(1-p)}$ & 0.02 & 0.06 & 0.05 & 0.02 & 0.21 & 0.60 & 0.03 & 0.01 \\ 
    $w_p$ & 0.02 & 0.05 & 0.05 & 0.02 & 0.21 & 0.62 & 0.02 & 0.01 \\ 
   \hline
\end{tabular}
\caption{Normalized (and rounded at two first digits in the Table) weights used for defining the inner product in $\mathbb{H}$. $w_e$ corresponds to equal weights, $w_{p(1-p)}$ to the scheme given in \eqref{weightstoone} and $w_p$ to weights proportional to the inverse of the average probability of occurrence, $w_j = (\int_0^1 p_j(t) dt)^{-1}$.}
\label{tab:w}
\end{table}

\begin{figure}[t]
\centerline{
\includegraphics[scale=.6]{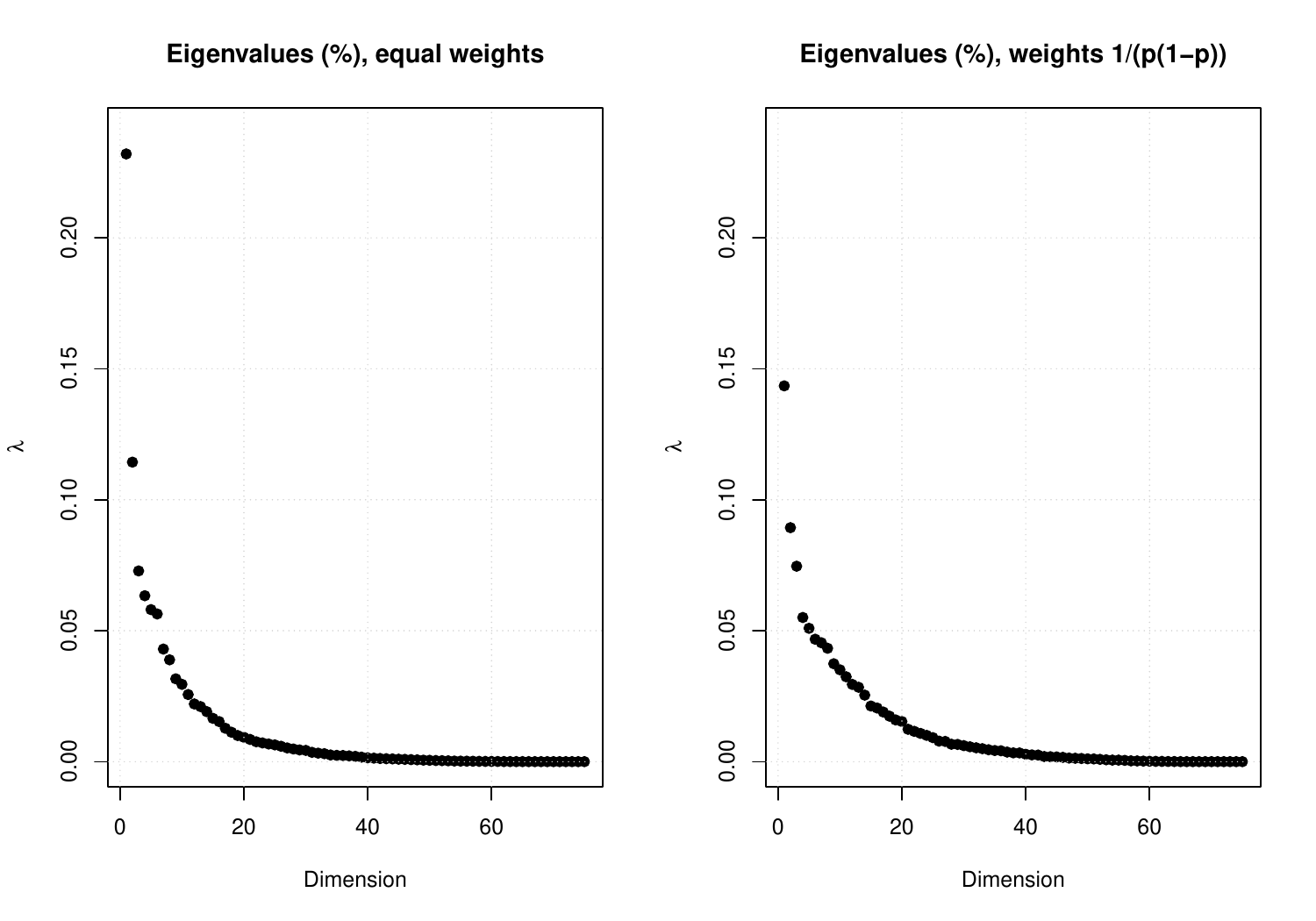}}
\caption{Proportion of total variance captured by the principal components considering equal weights $w_1=\ldots=w_q$ on the left and weights as in  \eqref{weightstoone} on the right.}
\label{valpMFPCAw}     
\end{figure}

\begin{table}[!h]
\centering
\begin{tabular}{rrrr}
  \hline
 & dim 1 & dim 2 & dim 3  \\ 
  \hline
Acid & 0.08 & 0.05 & 0.04 \\ 
  Basil & 0.21 & 0.20 & 0.09  \\ 
  Bitter & 0.00 & 0.02 & 0.14  \\ 
  Lemon & 0.15 & 0.04 & 0.06  \\ 
  Licorice & 0.01 & 0.02 & 0.01  \\ 
  Mint & 0.00 & 0.22 & 0.56  \\ 
  Salty & 0.19 & 0.36 & 0.05  \\ 
  Sweet & 0.36 & 0.08 & 0.05  \\ 
   \hline
\end{tabular}
\caption{Importance of the different states on each dimension of the MFPCA with weights given by \eqref{weightstoone}.}
\label{tab:var_impW}
\end{table}

\begin{figure}[!t]
\centering
\includegraphics[scale=.5]{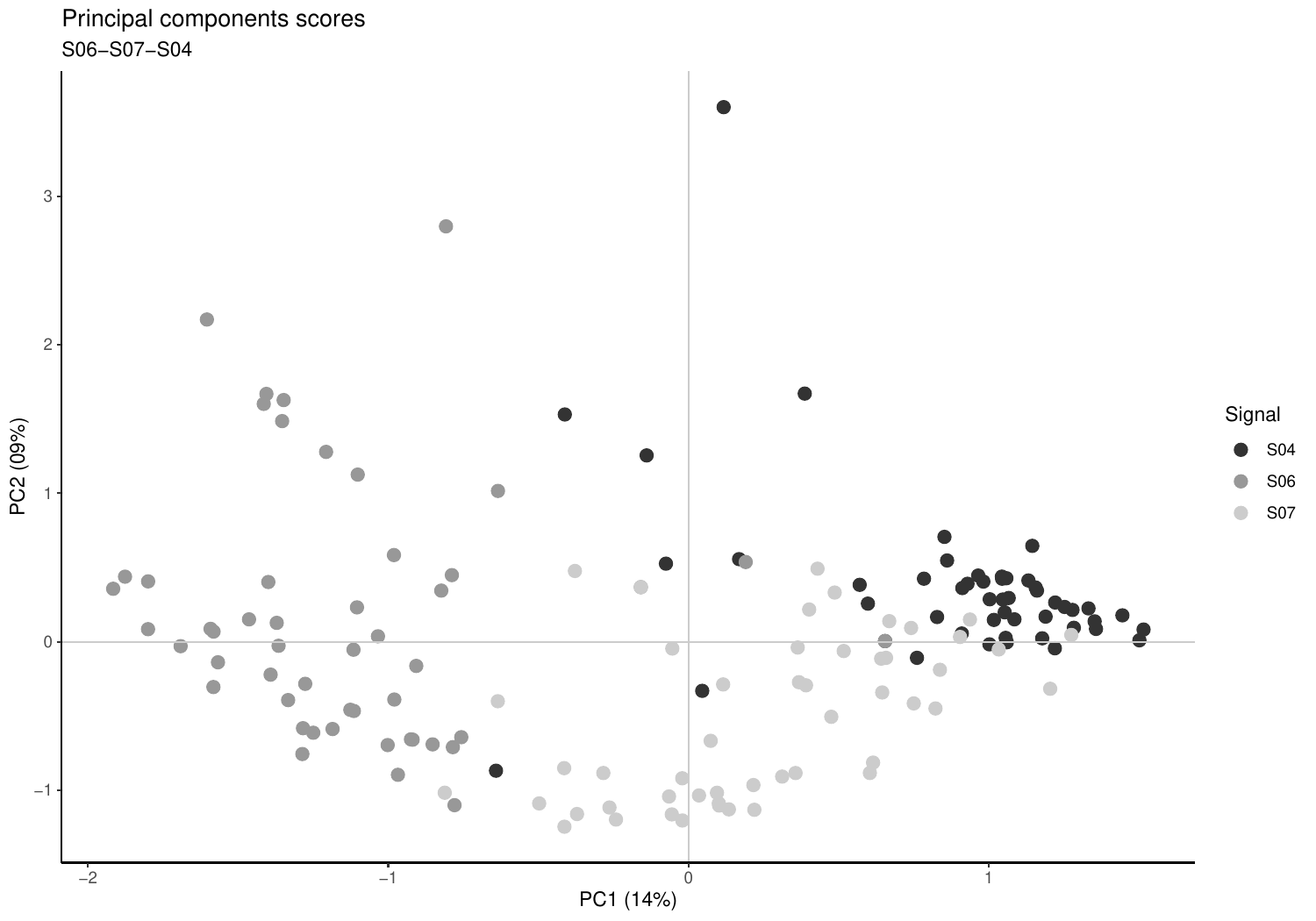}
\caption{Estimated principal component scores with  weights \eqref{weightstoone}. Different colours are used to distinguish the observations according to the set of gustomer-controlled stimuli.}
\label{fig:pcW}      
\end{figure}

The examination of the most important states (see Table~\ref{tab:var_impW}) in the first and second dimension  does  exhibit some little difference with the case of equal weights. First Basil appears to be influential in the first and in the second dimension, amplifying the probability of occurrence for negative value of the first component and positive value of the second component (see Figure~ \ref{fig:vp1W} and Figure~\ref{fig:vp2W}) even if the weight $w_j$ associated to this state is smaller in that unequal weights configuration compared to equal weights MFPCA. This makes it possible to identify, among the S06 tasting experiments, those in which the taste of basil was perceived. 
Another difference is the presence of Mint in the important variable, particularly on the second and third dimension whereas it was not present at all in the equal weights analysis.

\begin{figure}[t]
\includegraphics[scale=.65]{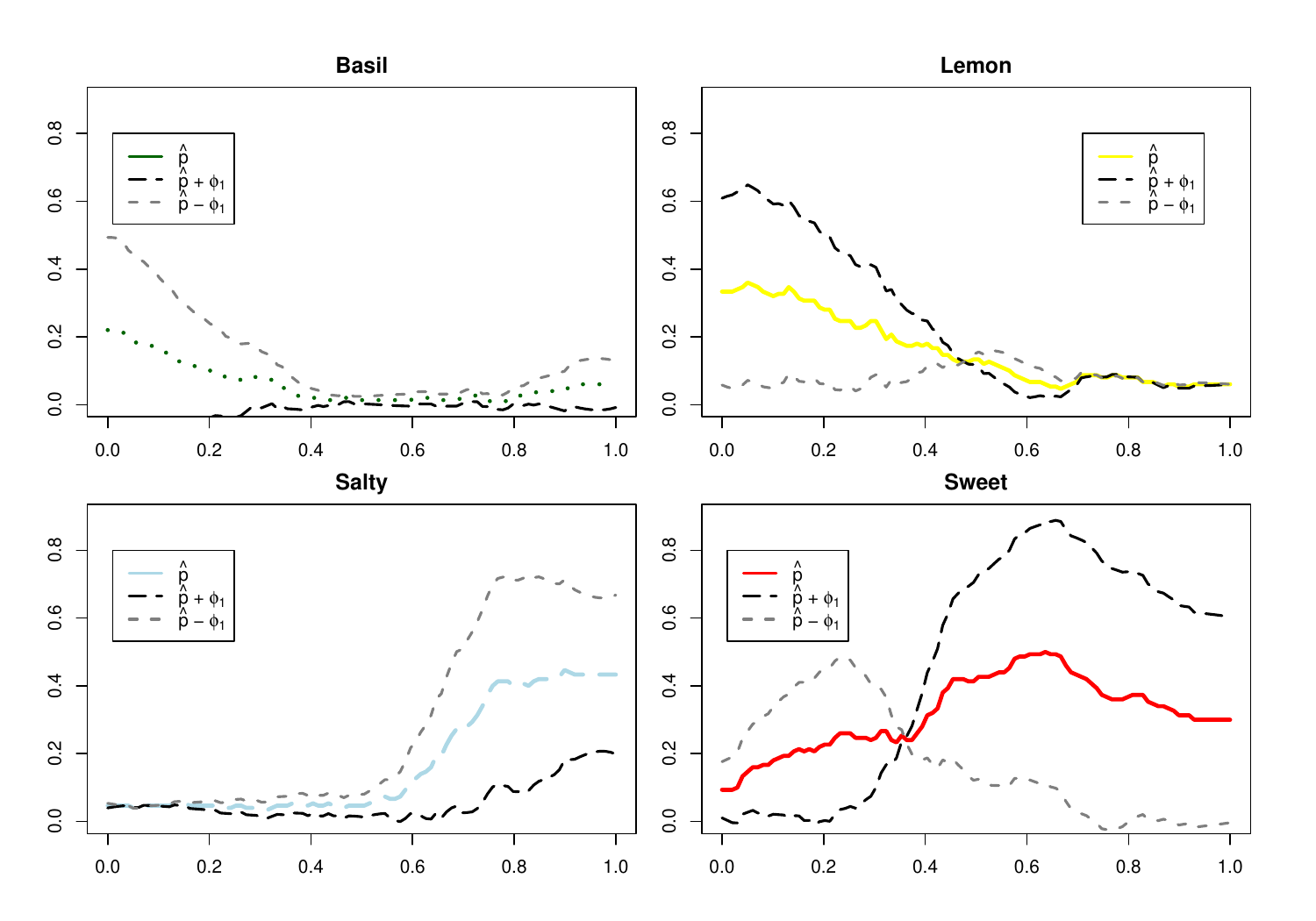}
\caption{MFPCA with unequal weights \eqref{weightstoone}. First component. Variations around the  probability of occurrence related to the first component of the Karhunen-Loève expansion,  for the most important categories.}
\label{fig:vp1W}      
\end{figure}

\begin{figure}
\centering
\includegraphics[scale=.65]{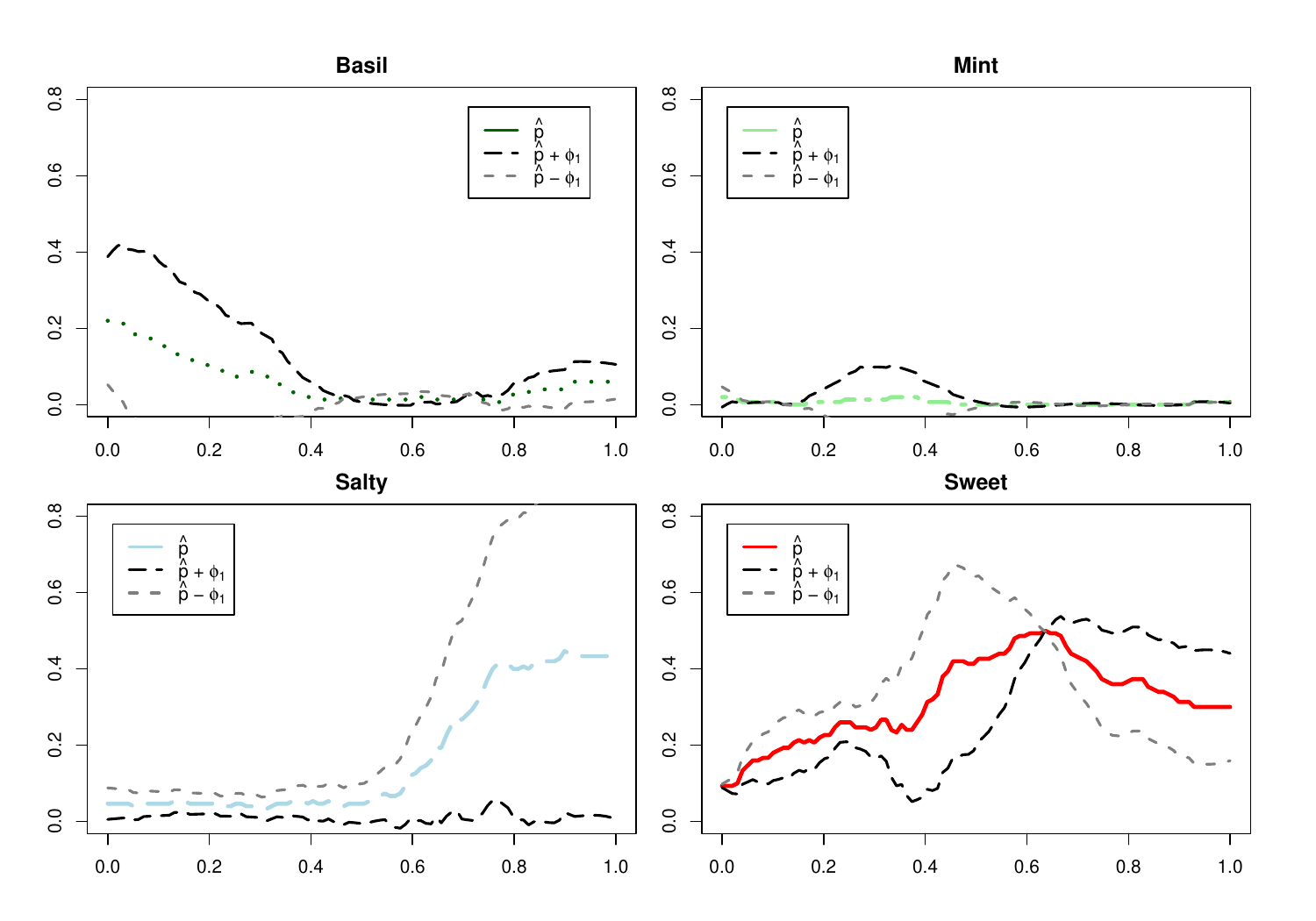}
\caption{MFPCA with unequal weights \eqref{weightstoone}. Second component. Variations around the  probability of occurrence related to the second component of the Karhunen-Loève expansion,  for the most important categories.}
\label{fig:vp2W}      
\end{figure}

\medskip

If we consider weights that take account of the average probability of occurrence, $w_j = \left( \int_0^1 \widehat{p}_j (t) dt \right)^{-1}$,  we remark in Table~\ref{tab:w} that those (normalized) weights $w_p$ 
are very similar to $w_{p(1-p)}$, those given by \eqref{weightstoone}. 
The decomposition of the trajectories are nearly the same and consequently not presented here.

\clearpage

\section{Analysis of TCATA experiments}
\label{section:TCATA}

In the same dataset \cite{BNV2023}, the same gustometer signals (S04, S06, and S07) were also evaluated using the Temporal Check-All-That-Apply protocol but with another panel of fifty panelists.

The mean trajectories $\widehat{p}_j(t)$ for each state $j$ are presented in Figure~\ref{fig:TCATAmean_traj}.  We can remark that their value is equal to 0 in the time interval $[0,0.2]$. This corresponds to a latency time between the start of the tasting and the first click. This latency time is removed in TDS to always have one descriptor selected, but it can be kept here. At the end of the tasting, all the descriptors are automatically unselected, which results in a zero mean value at time $t=1$. The most frequently observed states are Lemon at the beginning of the period, followed by Acid, Sweet, and Salty at the end of the period, as in the TDS evaluations of the signals. As the result of the TCATA experiment can be considered as a set of $q$ binary trajectories, MFPCA can be conducted on them. 

We  have drawn in Figure \ref{fig:TCATmeancitedstates}  the average number of selected states $\sum_{j=1}^q\widehat{p}_j(t)$ at each instant $t \in [0,1]$. Contrary to TDS experiments in which the number of selected states is always equal to one, we note here a variation over time, with a maximum value of 1.5 selected states around time 0.6.
In that sample, the average number of jumps per TCATA trajectory is 7, with a minimum of 1 and a maximum of 21.

The first eigenvalue represents  19\%  of the  total  variance, whereas the second one represents 12 \% of the total variance (see Figure~\ref{fig:TCATAeigenvalues}).  As seen in Table~\ref{tab:var_impTCATA} the importance of the different states is a slightly different compared to the analysis of TDS data. We note that the Basil state  appears to be important in both the first and second dimension, allowing to distinguish the S06 experiment, the only experiment with basil flavours, from the two other tasting experiments.  

\begin{figure}
\centering
\includegraphics[scale=.5]{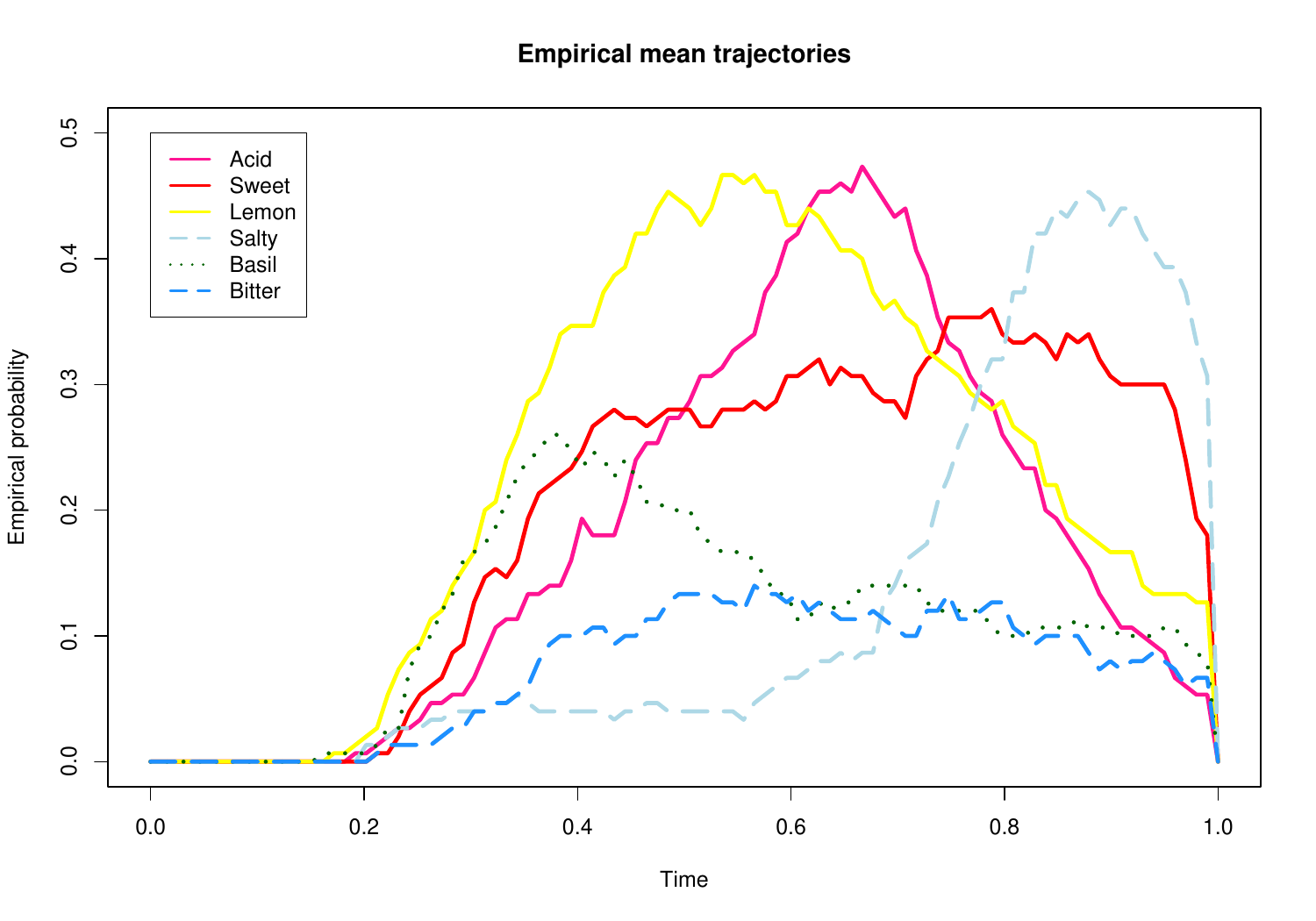}
\caption{TCATA experiment. Empirical probabilities of occurrence $\widehat{p}_j(t)$, for the most frequently selected states.}
\label{fig:TCATAmean_traj}       
\end{figure}

The estimated principal component scores are  drawn  in Figure~\ref{fig:TCATAind}, whereas the main variations around the mean functions, for the first two dimensions, are drawn in Figure~\ref{fig:TCATAvp1} and Figure~\ref{fig:TCATAvp2} for the  same states as those selected in the TDS sample.  
The points corresponding to the S04 experiment are characterized by a first  principal component  taking  positive values, related, as seen in Figure~\ref{fig:TCATAvp1} to a high probability of occurrence of Lemon in the middle of the tasting, a small probability of occurrence of Acid at the end of the period, and a high probability of occurrence of Sweet and a small probability of occurrence of Salty at the end of the period. Nothing appears on the first component for Acid perception. On the second component, the same points have negative scores, showing a high probability of Acid occurring after 0.5. 
The conclusions on the first two axes allow us to recognize the main structure of the original signal (see Figure~\ref{fig:Signaux}). 
The same reasoning can be applied to S06 and S07, which are well discriminated by the first two dimensions of the 1-2 map.

\begin{figure}
\centering
\includegraphics[scale=.5]{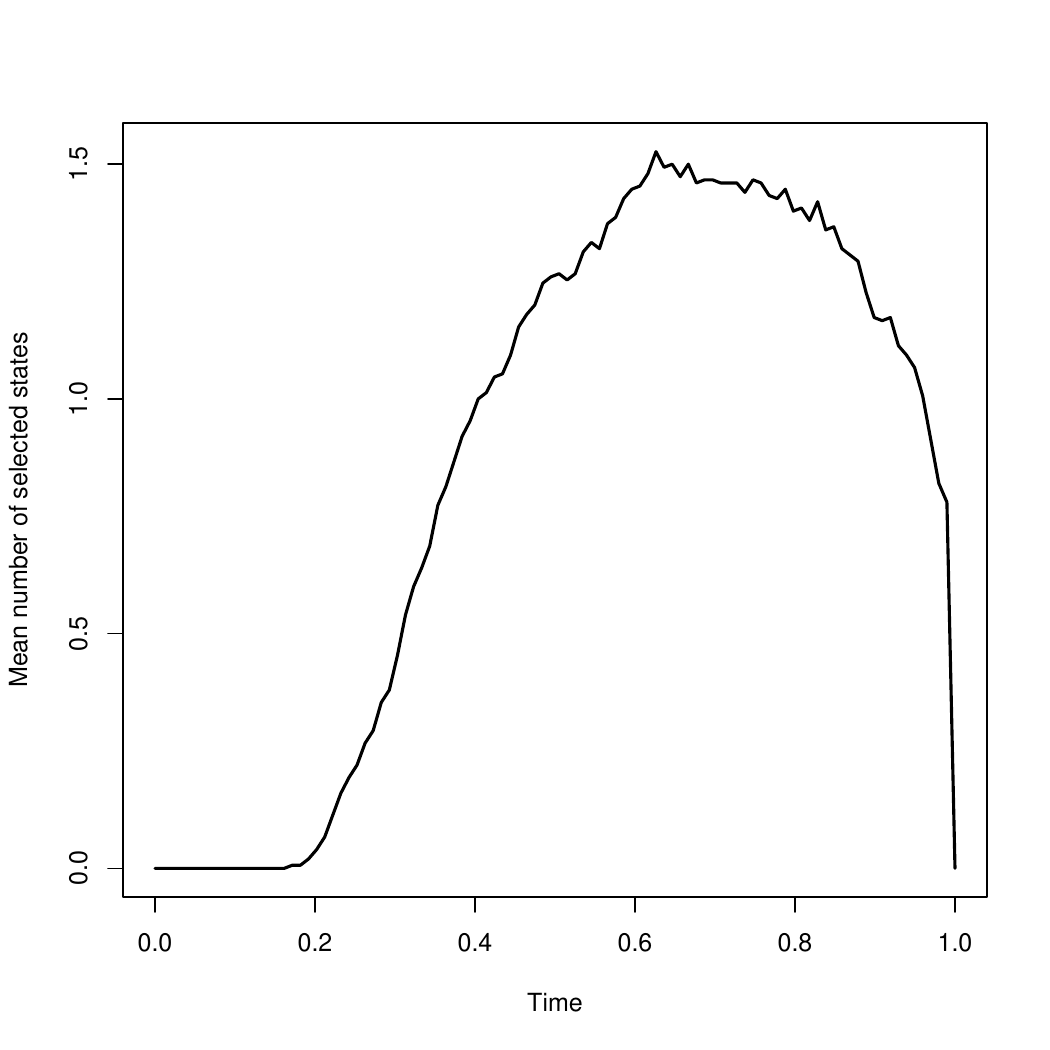}
\caption{TCATA experiment. Mean number $\sum_{j=1}^D\widehat{p}_j(t)$ of selected states over time.}
\label{fig:TCATmeancitedstates}       
\end{figure}

\begin{table}[ht]
\centering
\begin{tabular}{rrrr}
  \hline
& dim 1 & dim 2 & dim 3 \\
  \hline
Acid & 0.00 & 0.35 & 0.20 \\
  Basil & 0.19 & 0.19 & 0.04 \\
  Bitter & 0.01 & 0.00 & 0.01 \\
  Lemon & 0.27 & 0.11 & 0.39 \\
  Licorice & 0.00 & 0.01 & 0.00 \\
  Mint & 0.00 & 0.03 & 0.02 \\
  Salty & 0.16 & 0.11 & 0.20 \\
  Sweet & 0.36 & 0.19 & 0.14 \\
   \hline
\end{tabular}
\caption{Importance of the different states on each dimension of the MFPCA of TCATA experiments, with  equal weights.}
\label{tab:var_impTCATA}
\end{table}

\begin{figure}
\centering
\includegraphics[scale=.5]{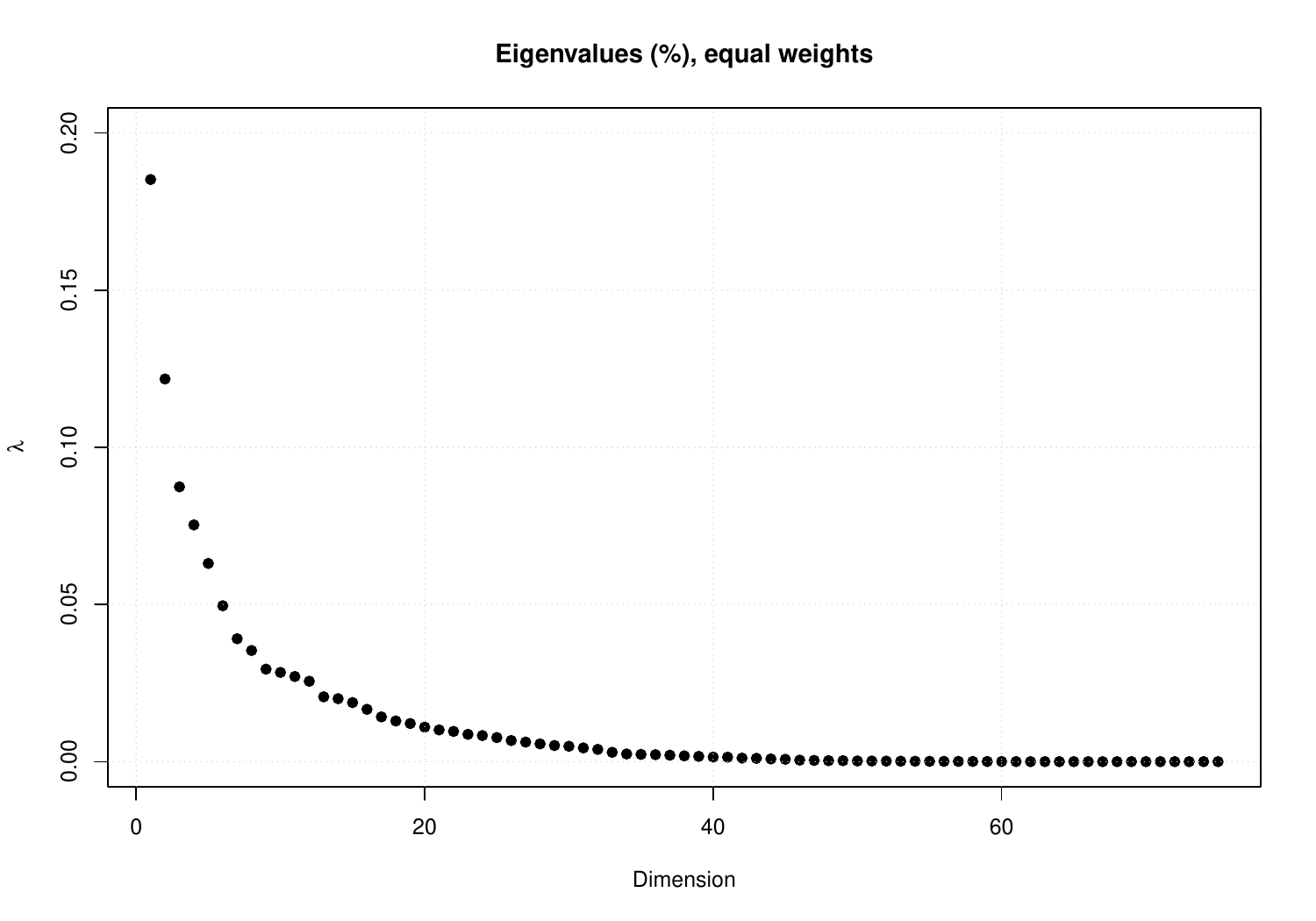}
\caption{Proportion of total variance captured by the principal components considering equal weights for TCATA experiment.}
\label{fig:TCATAeigenvalues}       
\end{figure}

\begin{figure}
\centering
\includegraphics[scale=.5]{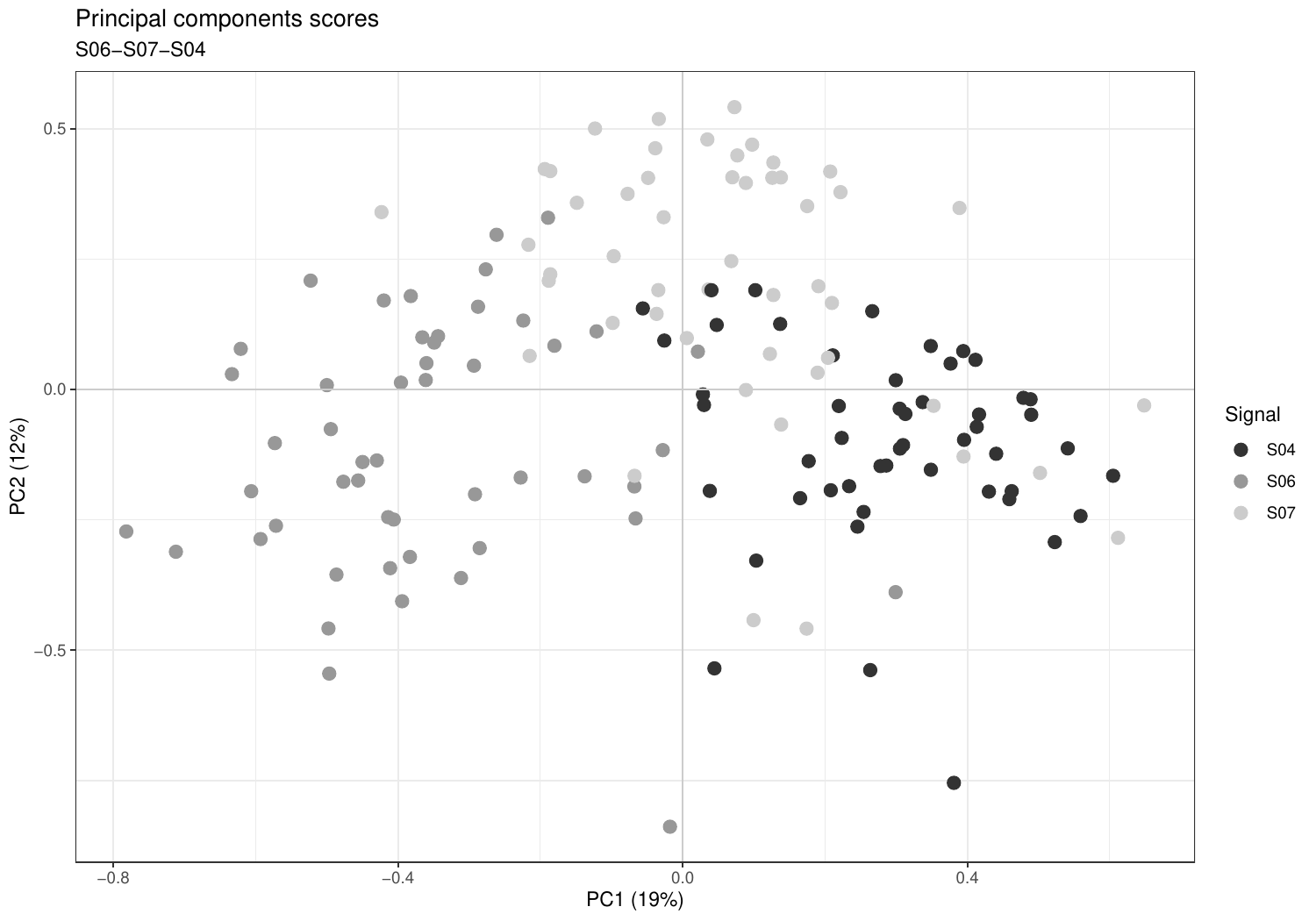}
\caption{Estimated MFCPA scores with TCATA data. Different levels of gray are used to distinguish the observations according to the set of gustometer-controlled stimuli.}
\label{fig:TCATAind}       
\end{figure}

\begin{figure}
\centering
\includegraphics[scale=.65]{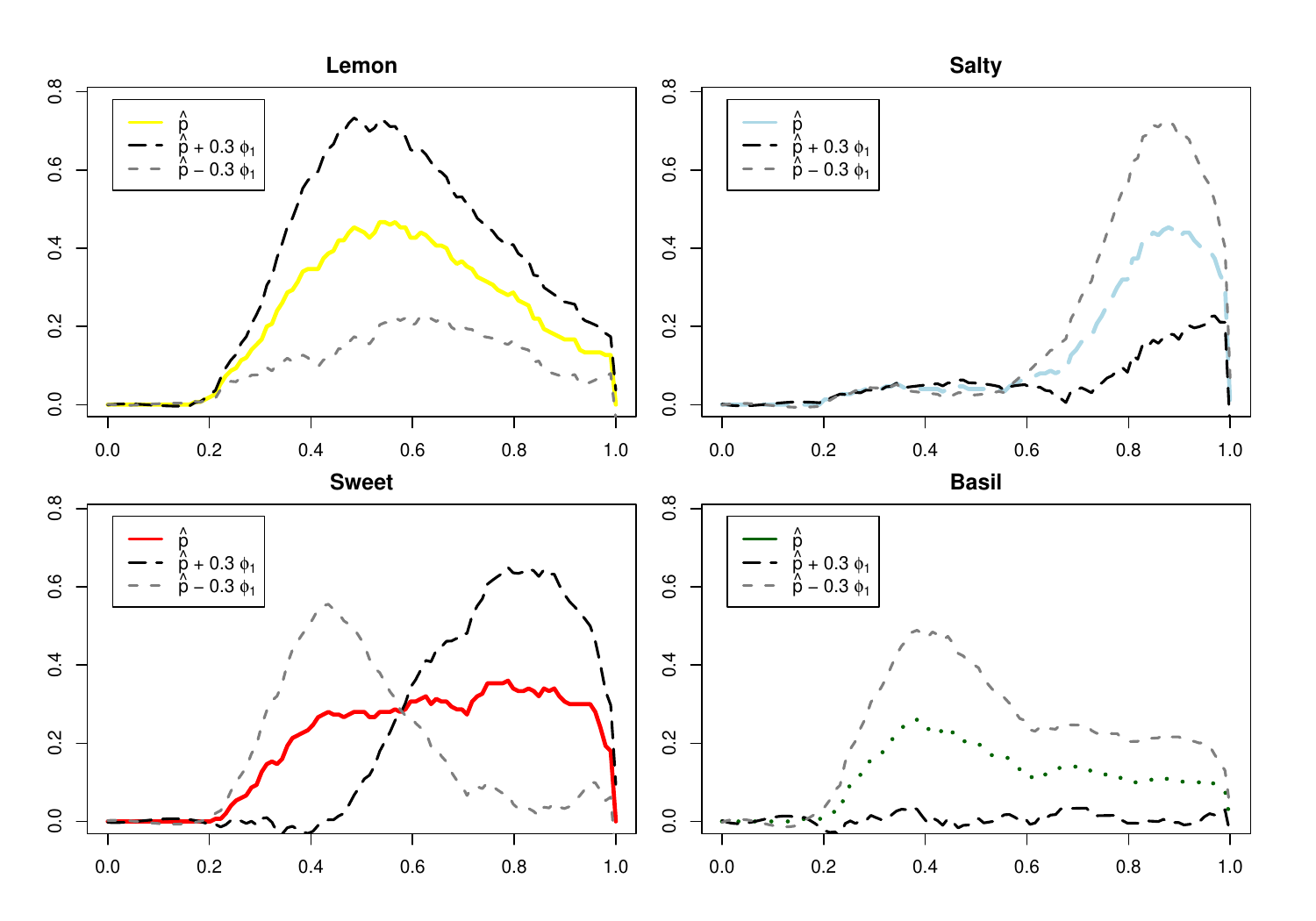}
\caption{MFPCA of TCATA data. First component. Variations around the  probability of occurrence related to the first component of the Karhunen-Loève expansion,  for the most important categories.}
\label{fig:TCATAvp1}       
\end{figure}

\begin{figure}
\centering
\includegraphics[scale=.65]{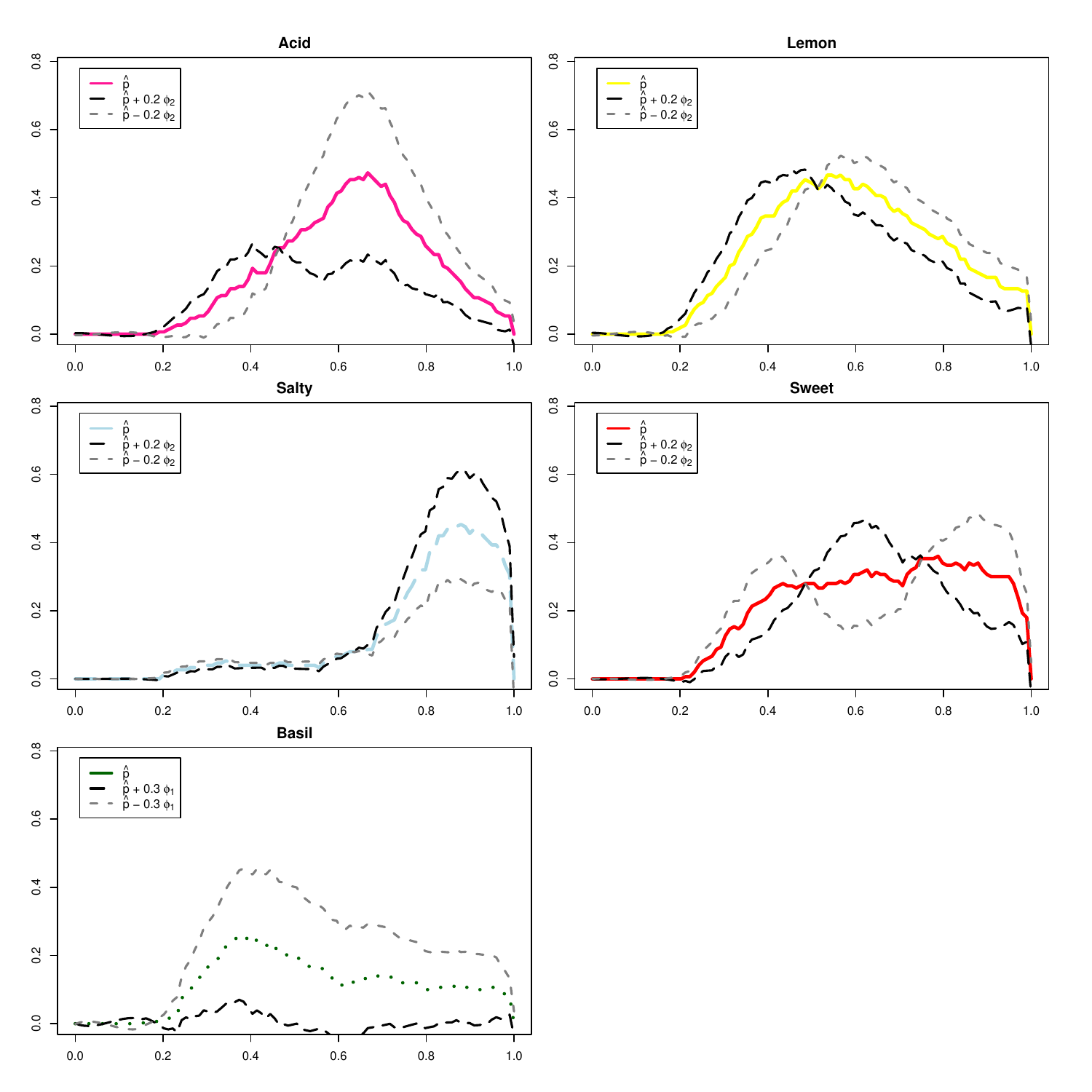}
\caption{MFPCA of TCATA data. Second component. Variations around the  probability of occurrence related to the second component of the Karhunen-Loève expansion, for the most important categories.}
\label{fig:TCATAvp2}       
\end{figure}


\section{A simple example of process taking values in $D([0,T], \mathbb{R})$}

We present in this Section a simple example of stochastic process whose trajectories are not continuous and belong to $D([0,T], \mathbb{R})$. The stochastic process  is continuous in probability, with mean trajectory that is a smooth function and a covariance function that is not differentiable in the neighborhood of the diagonal.
This example is derived from the well known telegraph process. Define 
\begin{align*}
   X(t) = \dfrac{(-1)^{N(t)}+1}{2},
\end{align*}
where   $N(t)$ is a Poisson process  defined on $[0,T]$, with intensity $\lambda >0$. 
For every $t \in [0,T]$, $X(t) \in \{0,1\}$. It is well known that the trajectories of $N(t)$ are not continuous, and the number of jumps (discontinuity points) on $[0,T]$ follows a Poisson distribution with parameter $\lambda T$. Thus, for all $\omega \in \Omega$, the number of jumps of $X(.,\omega)$ is finite and $X(.,\omega) \in D([0,T], \mathbb{R})$. 

Furthermore, the trajectories of $X(t)$  are continuous in probability. Indeed, we have 
\[
X(t+h) - X(t) = \frac{(-1)^{N(t+h)} - (-1)^{N(t)}}{2},
\]
and  $\mathbb{P}[X(t) \neq X(t+h)]$ is equal to the probability that $N(t+h) - N(t)$ is an odd number, which is equal to 
\begin{align*}
   \mathbb{P}[X(t) \neq X(t+h)] &= \sum_{k=0}^{\infty} \mathbb{P}(N(t+h) - N(t) = 2k + 1) \\
   &= \sum_{k=0}^{\infty} \dfrac{e^{-\lambda h}(\lambda h)^{2k +1}}{(2k +1)!} \\
   &= e^{-\lambda h} \sinh(\lambda h). 
\end{align*}

For sufficiently small $h$, we get with a Taylor expansion that $e^{-\lambda h} \sinh(\lambda h) =  \lambda h + o(h)$ and  
$\lim_{h \to 0} \mathbb{P}[X(t) \neq X(t+h)] =0$, showing that for all $t \in [0,T]$, the process $X(t)$ is continuous in probability. Furthermore it is possible to get an explicit expression for its expectation, whose derivatives of any order are well defined. 
First note that 
\begin{align*}
\mathbb{E}[(-1)^{N(t)}] & = \sum_{k=0}^{\infty} (-1)^k \mathbb{P}[N(t)=k] \\
 &= \exp(-2\lambda t),
\end{align*}
so that 
\[
\mathbb{E}[X(t)] =  \frac{\mathbb{E}[(-1)^{N(t)}] + 1}{2} = \frac{e^{-2\lambda t} + 1}{2}.
\]
Using the fact that, for $t \geq s$, $N(t) - N(s)$ has a Poisson distribution, with parameter $\lambda (t-s)$, we get that 
\[
\text{Cov}(X(s), X(t)) = \frac{e^{-2\lambda |t-s|} - e^{-2\lambda (s + t)}}{4}.
\]
The partial derivatives of $\text{Cov}(X(s), X(t))$ are not defined on the diagonal $(t,t)$ and we can deduce  (see \citealp{Loeve1978}, Chapter 11) that the stochastic process $X(t)$ is not mean square differentiable.  
Indeed, we can check that the process $X(t)$ is not mean square differentiable. We have
\begin{align*}
  \mathbb{E}\left[\left(\frac{X(t+h) - X(t)}{h}\right)^2\right] 
  & = \frac{1}{4h^2} \mathbb{E}\left[ \left( (-1)^{N(t+h)} - (-1)^{N(t)} \right)^2 \right] \\
  &= \frac{e^{-\lambda h} \sinh(\lambda h)}{4h^2}  .
\end{align*}

Taking the limit as $h \to 0$, we get
\[
\lim_{h \to 0}  \mathbb{E}\left[\left(\frac{X(t+h) - X(t)}{h}\right)^2\right]  = \lim_{h \to 0} \frac{\lambda}{4 h} = +\infty,
\]
and conclude that the stochastic process $X(t)$ is not mean square differentiable. 


\section{Noisy measurements of jump times}
\label{section:jump}

The jump times of a trajectory $\mathbf{X}_i \in D([0,1], \mathbb{R}^q)$
are defined as follows
\begin{align}
J_i = \{ t \in [0,1] ; \mathbf{X}(t-) \neq   \mathbf{X}(t) \}
\end{align}
The cardinal of $J_i$ is finite, say $n_i$ and we denote by $\{t_{i1} < \ldots   < t_{in_i} \}$ its elements ordered in ascending order.  
As suggested by an anonymous referee, one could consider that the observed  jump times are corrupted by noise, so that we effectively observe the jumps at the noisy time instants 
\[
\widetilde{t}_{ik} = t_{ik} + \epsilon_{ik}, \quad k=1, \ldots, n_i,
\]
with the following correction on the edges of the interval, $\widetilde{t}_{ik} = 0$ if $t_{ik} + \epsilon_{ik} <0$ and $\widetilde{t}_{ik} = 1$ if $t_{ik} + \epsilon_{ik} >1$.

We have evaluated the influence of such noisy observations on the results of our analysis, considering iid noise $\epsilon_{ik}$, with uniform distribution on $[-a,a]$ with a = 0.05, 0.1 and 0.2. 
Noise was added on the jump instants of TDS data presented in Figure~\ref{fig:TDS} and MFPCA, using the same parameters as previously (8 B-splines of order 3, with equispaced knots and time discretized in 100 equispaced points in $[0,1]$). A hundred simulations were performed for each scenario.

The average absolute correlations between the individual principal component scores of the first two components and those of the original (noise-free) first components were computed and are presented in Figure~\ref{fig:noisy}.
As expected, when the noise variance is small ($a = 0.05$), the correlations remain very close to one between the original scores and those based on noisy data. As it increases ($a = 0.1$ and $a = 0.2$), the correlation decreases, unsurprisingly, with a more pronounced reduction for the second component than for the first.

\begin{figure}
\centering
\includegraphics[scale=.65]{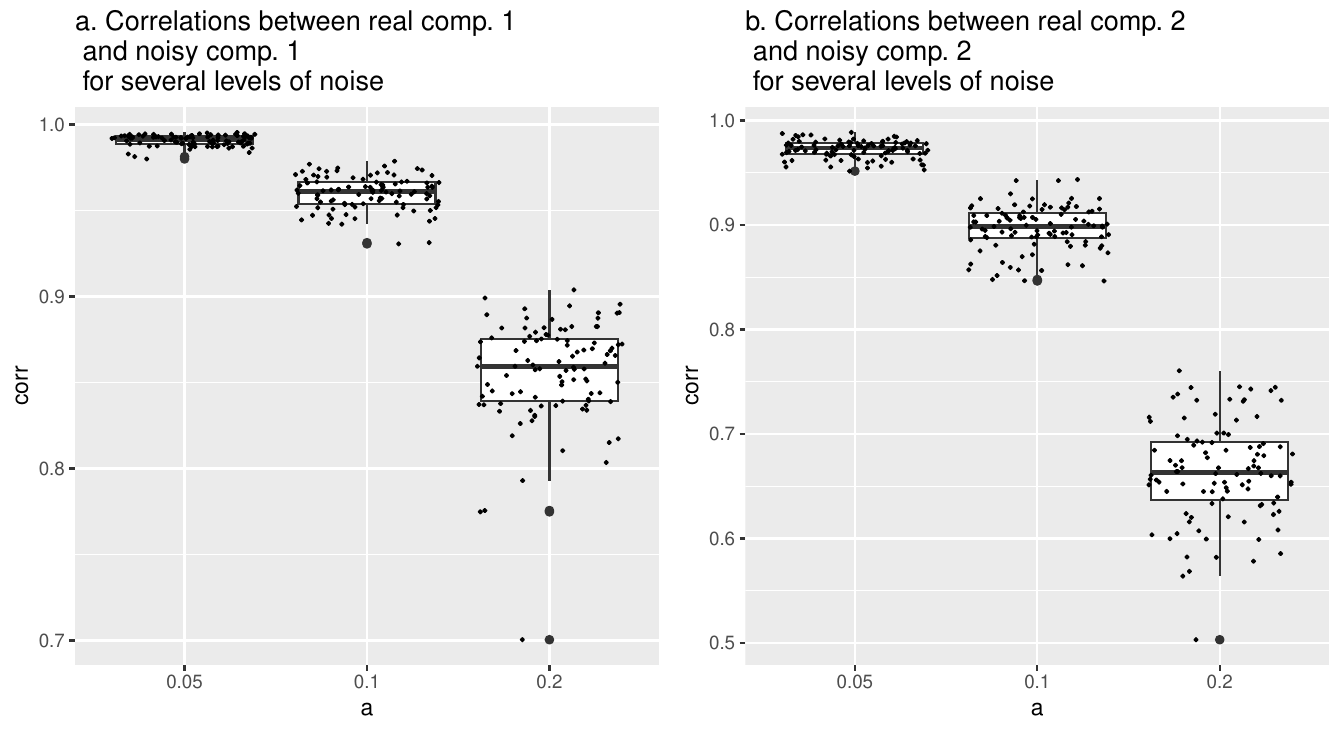}
\caption{Distribution of the correlation coefficient between original and noisy first two principal components with time jumps corrupted by an additive noise with uniform distribution over $[-a,a]$,  for $a \in \{ 0.05, 0.1, 0.2 \}$.}
\label{fig:noisy}       
\end{figure}

\end{document}